\documentclass[numsec,webpdf,modern,medium]{oup-authoring-template}
\onecolumn % for one column layouts

\usepackage{amsmath,amssymb,amsthm}
\theoremstyle{thmstyleone}%
\newtheorem{theorem}{Theorem}%  meant for continuous numbers
\newtheorem{proposition}[theorem]{Proposition}%
\theoremstyle{thmstyletwo}%
\newtheorem{remark}{Remark}%
\theoremstyle{thmstylethree}%
\newtheorem{definition}{Definition}
\usepackage{algpseudocode}
\usepackage{iftex}
\usepackage{graphicx}
\usepackage{booktabs}
\usepackage{threeparttable}
\usepackage{longtable}
\usepackage{array}
\usepackage{verbatim}
\begin{document}

\journaltitle{Journal Title Here}
\DOI{DOI added during production}
\copyrightyear{YEAR}
\pubyear{YEAR}
\vol{XX}
\issue{x}
\access{Published: Date added during production}
\appnotes{Paper}

%\firstpage{1}

%\subtitle{Subject Section}

\title[Partial WaveCanCoh]{Partial Wavelet Canonical Coherence for Nonstationary Signals with High Dimensional Confounders}

\author[1,$\ast$]{Haibo Wu}
\author[2]{Marina I. Knight}
\author[1]{Hernando Ombao}
% \author[3]{Fourth Author}
% \author[4]{Fifth Author\ORCID{0000-0000-0000-0000}}
\address[1]{\orgdiv{Statistics Program}, \orgname{King Abdullah University of Science and Technology}, \orgaddress{\street{Street}, \postcode{23955}, \state{Thuwal}, \country{Saudi Arabia}}}
\address[2]{\orgdiv{Department of Mathematics}, \orgname{University of York}, \postcode{YO10 5DD}, \state{York}, \country{United Kingdom}}

% \address[3]{\orgdiv{Department}, \orgname{Organization}, \orgaddress{\street{Street}, \postcode{Postcode}, \state{State}, \country{Country}}}
% \address[4]{\orgdiv{Department}, \orgname{Organization}, \orgaddress{\street{Street}, \postcode{Postcode}, \state{State}, \country{Country}}}

\corresp[$\ast$]{Corresponding author: \href{haibo.wu@kaust.edu.sa}{haibo.wu@kaust.edu.sa}}

% \received{Date}{0}{Year}
% \revised{Date}{0}{Year}
% \accepted{Date}{0}{Year}

%\editor{Associate Editor: Name}

%\abstract{
%\textbf{Motivation:} .\\
%\textbf{Results:} .\\
%\textbf{Availability:} .\\
%\textbf{Contact:} \href{name@email.com}{name@email.com}\\
%\textbf{Supplementary information:} Supplementary data are available at \textit{Journal Name}
%online.}

\abstract{We develop Partial Wavelet Canonical Coherence for measuring the direct canonical association between two multivariate nonstationary time series after adjustment for possibly high-dimensional confounders. To the best of our knowledge, this is the first method that establishes a frequency-domain formulation of the partial canonical correlation analysis for time series. Through a wavelet approach, the proposed method yields a scale-specific, time-varying measure of association capable to work with potential data nonstationarities. We formulate the target quantity under the multivariate locally stationary wavelet framework, develop principled estimation through local wavelet spectral matrices, and incorporate principal-component reduction for stable adjustment in high-dimensions. Simulations show that the method removes spurious marginal association induced by confounding and accurately recovers direct association, including in higher-dimensional settings. Analysis of U.S. exchange-traded funds reveals substantial time-varying and scale-dependent direct canonical association after adjustment for external market effects.} 

\keywords{canonical coherence, confounding processes, wavelets}

% \keywords[Abbreviations]{abbreviation1, abbreviation2, abbreviation3, abbreviation4}

% \otherabstract[Additional Abstract]{Use this element for elements such as Graphical abstract, Lay summary, Translated abstract etc. Que cum aut etum qui ium dolupta ssequia autati odis demporepe ad et es alit rem repudaerae min et volorum re volupta nobit volectur aut fuga.}

% \otherabstract[Graphical Abstract]{\colorbox{black!20}{\hbox to 0.97\textwidth{\vbox to 50pt{}}}}

% \boxedtext{Key Messages}{
% \begin{itemize}
% \item Key boxed text here.
% \item Key boxed text here.
% \item Key boxed text here.
% \end{itemize}}

\maketitle

%\begin{epigraph}
%Epigraph text. Ximporem qui reperov idempedit modio. Bisto imagnatem quae aceptis
%nobitae quid eum rae adignis quias-sit vellacc uptatur sunt quis rentis eaquasit alia deliquam
%rec-to consed unt. Empor sum ratur ressimusdae. Nam fugiae.
%\source{Epigraph source}
%\end{epigraph}

\section{Introduction}
\label{sec:intro}
In empirical financial studies, an important objective is to assess the direct association between two groups of assets, rather than between two individual assets, after accounting for external market influences and common risk factors. This question is particularly relevant when examining whether a group of broad market, growth, small-cap, and technology-oriented exchange-traded funds (ETFs) is directly associated with a group of sector ETFs representing financially and macroeconomically sensitive segments of the U.S. equity market. 
The type of dependence that should be addressed in financial time series is between groups of assets group rather than a single pair of assets. Notably, the observed association between two asset groups may be induced or amplified by external market movements, including international equity fluctuations, country-level shocks, or other common risk factors. Without proper adjustment, marginal cross-group dependence may therefore overstate the direct linkage between the two groups. Moreover, this direct dependence is likely to be dynamic and frequency-specific: it may strengthen during periods of market stress, weaken in stable regimes, and vary across investment horizons. Short-run co-movements may reflect liquidity effects or market-wide rebalancing, whereas lower-frequency components may be more closely related to macroeconomic or global market conditions. These considerations motivate the need for a framework that can quantify direct multivariate dependence between asset groups in a time- and frequency-localized manner; see Figure~\ref{fig:fig1_bigpicture}.

Existing methods do not provide a unified solution to this problem. Time-domain correlation and canonical correlation methods can summarize multivariate association, but they cannot identify the frequency-specific components underlying the dependence. Classical coherence methods provide frequency-domain information, yet they are typically pairwise or unadjusted, making it difficult to separate direct cross-block dependence from association induced by external market factors. Partial canonical correlation analysis allows adjustment for confounders, but existing extensions to multivariate time series are formulated in the time domain and do not capture the time-varying and frequency-specific structure of nonstationary signals. These limitations leave a methodological gap for direct dependence analysis between two sets of nonstationary multivariate time series. To address this gap, we propose Partial Wavelet Canonical Coherence (Partial WaveCanCoh), a framework for inference on time-varying and scale-specific direct canonical dependence between two multivariate time series blocks after accounting for external confounding processes, also with potential nonstationary character.

\begin{figure}[htbp]
    \centering
    \includegraphics[width=0.75\linewidth]{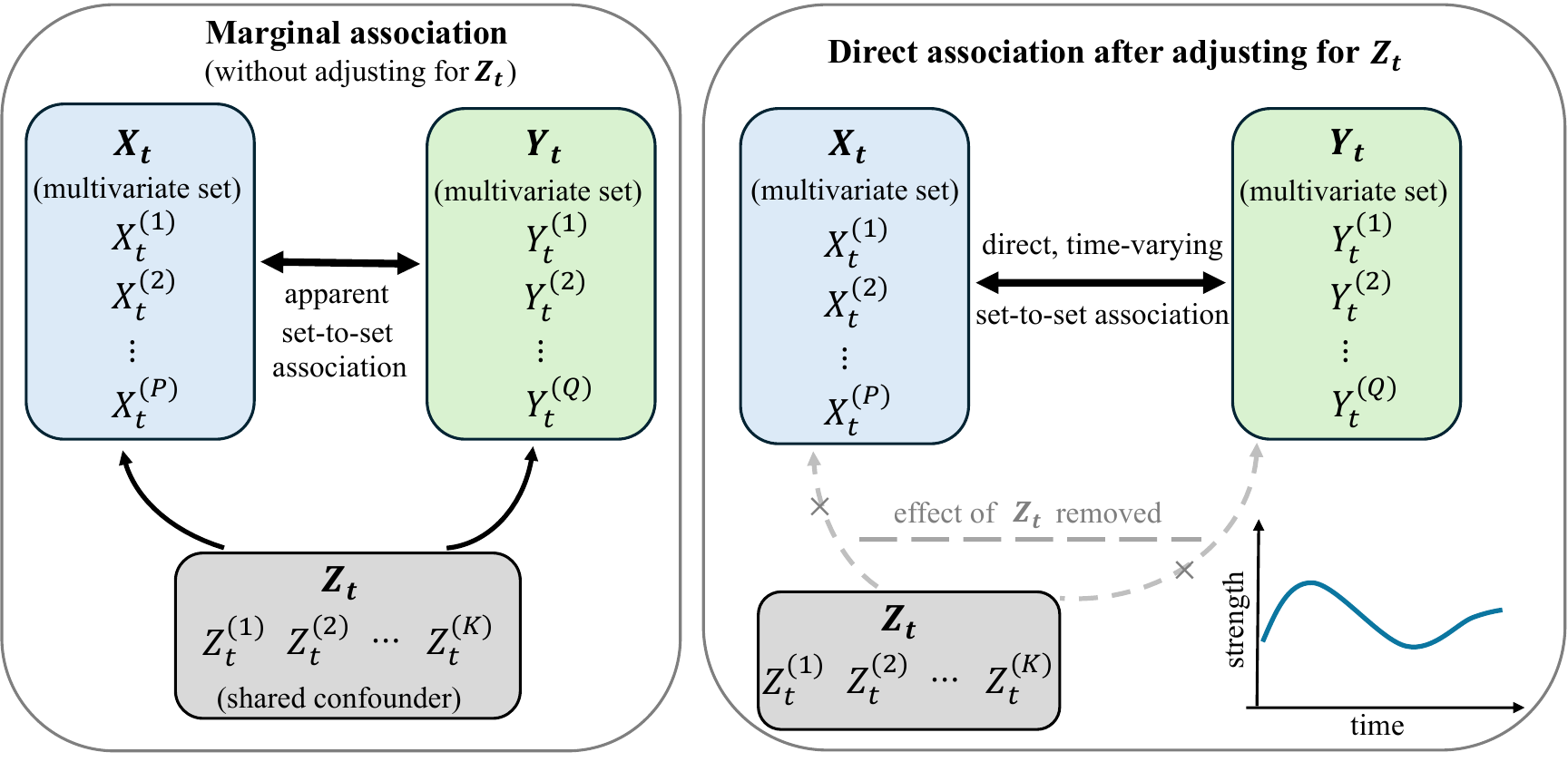}
    \caption{The template: distinguishing direct and confounded time-varying association between two sets of multivariate processes. The left panel shows the marginal association between $\{\mathbf{X}_t\}$ and $\{\mathbf{Y}_t\}$, which may be induced or distorted by a shared (high dimensional) confounding process $\{\mathbf{Z}_t\}$. The right panel shows the target of interest: the direct, time-varying canonical association between $\{\mathbf{X}_t\}$ and $\{\mathbf{Y}_t\}$ after removing the effects from $\{\mathbf{Z}_t\}$.}
    \label{fig:fig1_bigpicture}
\end{figure}

A classical and widely used approach for {\em quantifying dependence between two sets of variables} is canonical correlation analysis (CCA) \citep{hotelling1936relations}. To account for the effect of an additional, potentially confounding set of variables, \cite{rao1969partial} introduced partial canonical correlation. These ideas were subsequently extended to the time series setting: \cite{box1977canonical} and \cite{geweke1982measurement} adapted CCA to characterize dependence between multivariate time series, and \cite{Serge1990partial} extended the partial CCA framework to multivariate time series. In many applications, including finance and neuroscience, dependence is inherently frequency-specific, motivating frequency-domain analogues of such set-to-set association measures. Using Fourier analysis, \cite{brillinger2001time} developed a spectral version of CCA for time series, which quantifies the overall association between two sets of series as a function of frequency. Despite this development, to the best of our knowledge, a corresponding frequency-domain formulation of \emph{partial} CCA has not yet been established, even though this is is conceptually natural. %building on spectral CCA   
%    \textcolor{red}{there is a claim about novelty}

At the same time, many real-world time series exhibit pronounced {\em nonstationarity}, making purely Fourier-based approaches restrictive in practice. Owing to its excellent time-localization properties, wavelet methodology has become a standard tool for analyzing nonstationary time series and studying dependence in multivariate time series \citep{Cohen2011, ombao2014, schulte2016wavelet}. Nevertheless, most existing wavelet-based work focuses on pairwise coherence between individual channels or nodes, rather than on set-to-set measures that summarize association between two collections of time series. Addressing this limitation, \cite{wu2025wavelet} introduced a wavelet-domain framework for quantifying time-varying coherence between two sets of time series based on the multivariate locally stationary wavelet process, providing, the first principled solution to the {\em set-to-set coherence} problem in the wavelet setting. However, this framework does not explicitly adjust for the effects of potentially confounding variables. Related work on partial wavelet coherence has been developed in \cite{ombao2014, ding2025partial}, but these approaches are formulated at the level of two individual channels and do not extend to set-to-set coherence between two multivariate collections.%to our knowledge, 

In this paper, we develop a rigorous wavelet-domain framework for quantifying partial canonical coherence between two sets of multivariate time series while explicitly adjusting for the influence of additional, potentially confounding multivariate processes, all featuring a potential nonstationary character. Our approach builds on the set-to-set wavelet coherence framework of \cite{wu2025wavelet} and provides a principled measure of direct association that is localized in scale and accommodates nonstationary dependence. The main contributions of this work are summarized as follows: 
(1) we introduce, to our knowledge, the first measure of partial canonical correlation for multivariate time series to the wavelet (time–frequency) domain; 
(2) by leveraging the time-localization property of wavelets, the proposed framework naturally accommodates nonstationary dependence and avoids the global stationarity assumptions inherent to Fourier-based formulations; 
(3) we develop a complete set of definitions together with rigorous estimation procedures of the proposed measures; and 
(4) the proposed framework can be integrated with common dimension-reduction techniques for multivariate time series, enabling scalable adjustment when the confounding processes are high-dimensional. In particular, high-dimensional confounding blocks may be summarized using dynamic factor models \citep{lam2012factor, Mario2000DFM}, dynamic principal component methods \citep{pena2016generalized}, or wavelet-domain dimension reduction approaches such as the adaptive wavelet-domain principal component analysis of \cite{knight2024adaptive}. These techniques provide a natural interface between high-dimensional confounding adjustment and our proposed wavelet-domain partial canonical coherence analysis.

The paper is organized as follows. Section \ref{sec:related_methods} provides a brief review of classical partial canonical correlation analysis  and wavelet canonical coherence for multivariate time series. Section \ref{sec:partial_wavecancoh} introduces the proposed partial wavelet canonical coherence (Partial WaveCanCoh) framework, including its formal definition and the associated estimation procedures. Section \ref{sec:high_dim} discusses extensions to high-dimensional settings and presents practical strategies for handling large confounding blocks. Section \ref{sec:sim} reports simulation studies evaluating the finite-sample performance of the proposed methods. Section \ref{sec:data_analysis} presents a real-data application to financial markets. Section \ref{sec:conc} concludes with a discussion and  directions for future research.

\section{Related Methods}\label{sec:related_methods}

\subsection{Partial canonical correlation analysis}\label{sec:partial_cca}
Before introducing the framework of partial canonical coherence for multivariate time series, we first provide a brief overview of classical partial canonical correlation analysis. The work of \cite{rao1969partial} generalizes the notion of partial correlation—originally defined between two random variables $X$ and $Y$ conditional on a third variable $Z$—to the setting of multivariate random vectors. Specifically, it extends this concept to define the partial canonical correlation between two sets of variables, $\mathbf{X}$ and $\mathbf{Y}$, conditioned on a third set, $\mathbf{Z}$. This formulation builds upon the foundational theory of canonical correlation analysis developed by \cite{hotelling1936relations}, by removing the linear influence of $\mathbf{Z}$ from both $\mathbf{X}$ and $\mathbf{Y}$ prior to evaluating their mutual association. 

Suppose there are three sets of variables  $\mathbf{X}=\left(X^{(1)}, \ldots, X^{(P)}\right)^{\top}$, $\mathbf{Y}=\left(Y^{(1)}, \ldots, Y^{(Q)}\right)^{\top}$ and $\mathbf{Z}=\left(Z^{(1)}, \ldots, Z^{(K)}\right)^{\top}$ from a certain joint distribution, and the goal is to quantify the direct association between $\mathbf{X}$ and $\mathbf{Y}$ after removing the linear effects of $\mathbf{Z}$. Specifically, this is achieved by considering the linear combinations of the residuals from regressing $\mathbf{X}$ and $\mathbf{Y}$ on $\mathbf{Z}$, and finding those combinations that yield maximal correlation. Formally, the partial canonical correlation between $\mathbf{X}$ and $\mathbf{Y}$ given $\mathbf{Z}$ is defined as,
\begin{align*}
    \boldsymbol{\rho}(\mathbf{X}, \mathbf{Y} \mid \mathbf{Z}):= \operatorname{max}_{\mathbf{a,b}} \operatorname{Corr}\left(\mathbf{a}^{\top}\mathbf{R}_\mathbf{X}, \mathbf{b}^{\top}\mathbf{R}_\mathbf{Y}\right),
\end{align*}
with residual errors obtained by following multivariate linear regression modeling, namely,
\begin{align*}
    \mathbf{R}_\mathbf{X} &= \mathbf{X} - {\boldsymbol{\beta}}_{\mathbf{X}} \mathbf{Z},  \, \mbox{ where the dimension of }\mathbf{R}_\mathbf{X} \mbox{ is } P \times 1,\\ 
    \mathbf{R}_\mathbf{Y} &= \mathbf{Y} - {\boldsymbol{\beta}}_{\mathbf{Y}}
    \mathbf{Z},  \, \mbox{ the dimension of }  \mathbf{R}_\mathbf{Y} \mbox{ is } Q \times 1,
\end{align*} 
and $\boldsymbol{\beta}_{\mathbf{X}}$ $(P \times K)$ and $\boldsymbol{\beta}_{\mathbf{Y}}$ $(Q \times K)$ are regression coefficients for $\mathbf{X}$ and $\mathbf{Y}$ on $\mathbf{Z}$, respectively. Assume we denote $\boldsymbol{\Sigma}$ as the covariance matrix of the $(P+Q+K)$ variables pertaining to the three sets,
\begin{align}
\boldsymbol{\Sigma}= \left[\begin{array}{lll}
\boldsymbol{\Sigma}_{\mathbf{XX}} & \boldsymbol{\Sigma}_{\mathbf{XY}} & \boldsymbol{\Sigma}_{\mathbf{XZ}} \\
\boldsymbol{\Sigma}_{\mathbf{YX}} & \boldsymbol{\Sigma}_{\mathbf{YY}} & \boldsymbol{\Sigma}_{\mathbf{YZ}} \\
\boldsymbol{\Sigma}_{\mathbf{ZX}} & \boldsymbol{\Sigma}_{\mathbf{ZY}} & \boldsymbol{\Sigma}_{\mathbf{ZZ}}
\end{array} \right]_{(P+Q +K) \times (P+Q+K)}
\end{align}
where $\boldsymbol{\Sigma}_{\mathbf{\cdotp \, \cdotp}}$ denotes variance (covariance) among $\mathbf{X}$, $\mathbf{Y}$, and $\mathbf{Z}$ correspondingly and we assume the means of $\mathbf{X}$, $\mathbf{Y}$, and $\mathbf{Z}$ to be zero. The optimal parameters are %obtained  as
\begin{align*}
    {\boldsymbol{\beta}}_{\mathbf{X}} &=  {\boldsymbol{\Sigma}}_{\mathbf{XZ}}{\boldsymbol{\Sigma}}_{\mathbf{ZZ}}^{-1},\\
    {\boldsymbol{\beta}}_{\mathbf{Y}} &= {\boldsymbol{\Sigma}}_{\mathbf{YZ}}{\boldsymbol{\Sigma}}_{\mathbf{ZZ}}^{-1}.
\end{align*}
The direct overall linear association between two sets of variables $\mathbf{X}$ and $\mathbf{Y}$ conditional on $\mathbf{Z}$  can be measured through the canonical correlation between residuals $\mathbf{R}_\mathbf{X}$ and $\mathbf{R}_\mathbf{Y}$. Denoting by  $\boldsymbol{\Sigma}_{\mathbf{XX \cdot Z}}$,  $\boldsymbol{\Sigma}_{\mathbf{YY \cdot Z}}$ and $\boldsymbol{\Sigma}_{\mathbf{XY \cdot Z}}$ the covariance matrices of the (zero-mean) residual vectors, we have,
\begin{align*}
    \boldsymbol{\Sigma}_{\mathbf{XX \cdot Z}} &= \operatorname{E} [\mathbf{R}_\mathbf{X} \mathbf{R}_\mathbf{X}^{\top} ] = \boldsymbol{\Sigma}_{\mathbf{XX}} - \boldsymbol{\Sigma}_{\mathbf{XZ}}\boldsymbol{\Sigma}_{\mathbf{ZZ}}^{-1} \boldsymbol{\Sigma}_{\mathbf{ZX}},\\ 
    \boldsymbol{\Sigma}_{\mathbf{YY \cdot Z}} &= \operatorname{E}  [\mathbf{R}_\mathbf{Y} \mathbf{R}_\mathbf{Y}^{\top} ] = \boldsymbol{\Sigma}_{\mathbf{YY}} - \boldsymbol{\Sigma}_{\mathbf{YZ}}\boldsymbol{\Sigma}_{\mathbf{ZZ}}^{-1} \boldsymbol{\Sigma}_{\mathbf{ZY}}, \\ 
        \boldsymbol{\Sigma}_{\mathbf{XY \cdot Z}} &= \operatorname{E}  [\mathbf{R}_\mathbf{X} \mathbf{R}_\mathbf{Y}^{\top} ] = \boldsymbol{\Sigma}_{\mathbf{XY}} - \boldsymbol{\Sigma}_{\mathbf{XZ}}\boldsymbol{\Sigma}_{\mathbf{ZZ}}^{-1} \boldsymbol{\Sigma}_{\mathbf{ZY}}.
\end{align*}
Based on the definition of classic canonical correlation between two sets of variables and the above results, the partial canonical correlation coefficient, $\boldsymbol{\rho}_{\mathbf{XY\cdot Z}}$, can therefore be defined as
\begin{align}
    &\boldsymbol{\rho}_{\mathbf{XY\cdot Z}} = \operatorname{max}_{\mathbf{a,b}} \{\mathbf{a}^{\top}  \boldsymbol{\Sigma}_{\mathbf{XY \cdot Z}}\mathbf{b}\}, \, \text{subject to }   \\
&\mathbf{a}^{\top} \boldsymbol{\Sigma}_{\mathbf{XX \cdot Z}}\mathbf{a} = 1 \text{ and } \mathbf{b}^{\top}  \boldsymbol{\Sigma}_{\mathbf{YY \cdot Z}} \mathbf{b} =1, \notag
\end{align}
where $\{\mathbf{a}\}_{P \times 1}$ and $\{\mathbf{b}\}_{Q \times 1}$ are partial canonical vectors for $\mathbf{X}$ and $\mathbf{Y}$, respectively. 

As noted earlier, partial canonical correlation analysis has not been developed for nonstationary multivariate time series, particularly in the frequency domain. Additionally, existing work on multivariate time series has largely focused on the stationary setting in time domain. For example, \cite{Serge1990partial} developed a canonical partial autocorrelation function for stationary multivariate processes based on the canonical analysis of forward and backward innovations, but this does not provide a framework for the partial canonical association between two nonstationary multivariate processes.
We therefore begin by reviewing wavelet canonical coherence (WaveCanCoh, \cite{wu2025wavelet}), which provides the natural starting point for our development. While WaveCanCoh characterizes scale-specific, time-varying canonical association between two sets of nonstationary multivariate processes, it does not adjust for confounding effects; the proposed Partial WaveCanCoh framework is designed to address precisely this limitation.
\subsection{Wavelet canonical coherence (WaveCanCoh)} \label{sec:wavecancoh}

Both WaveCancoh and our proposed Partial WaveCanCoh are developed within the MvLSW framework \citep{nason2000wavelet, ombao2014}, which provides a flexible representation for multivariate nonstationary time series using wavelet bases. For readers unfamiliar with wavelet methods, more details can be found in \cite{daubechies1992ten}.

The locally stationary wavelet (LSW) model was first proposed by \cite{nason2000wavelet} for univariate nonstationary time series  (see \cite{dahlhaus2000likelihood} for the Fourier counterpart and \cite{palasciano2025continuous} for the continuous-time version) and later extended to multivariate processes by \cite{ombao2014}. Consider a $P$-dimensional stochastic process $\{\mathbf{X}_t\}_{t=1}^T$ with $\mathbf{X}_t = \left(X_t^{(1)}, X_t^{(2)}, \ldots, X_t^{(P)}\right)^\top$ at each $t=1,\ldots,T$, whose second-order structure may evolve over time. Often the individual $p$th time series is referred to as a `channel', a terminology we also adopt here, with $p=1, \ldots, P$. 

Under the multivariate locally stationary wavelet (MvLSW) representation, the process is 
\begin{align} \label{equ: MvLSW}
\mathbf{X}_t = \sum_{j=1}^{\infty}\sum_{k\in\mathbb{Z}}
\mathbf{V}_j(k/T)\psi_{j,k}(t)\boldsymbol{\xi}_{j,k},
\end{align}
where $\{\psi_{j,k}\}$ denotes the discrete non-decimated wavelet system indexed by scale $j$ and location $k$; the vectors $\{\boldsymbol{\xi}_{j,k}\}$ are mutually uncorrelated $P$-dimensional random vectors with mean $\mathbf{0}$ and identity covariance matrix; the matrix-valued function $\mathbf{V}_j(u)$ is a $P\times P$ lower-triangular transfer matrix that determines how variation at scale $j$ contributes to the multivariate dependence structure locally at rescaled time $u=k/T$. Since the wavelet functions $\psi_{j,k}(t)$ are localized in both time and frequency, the coefficients $\mathbf{V}_j(u)$ allow the second-order properties of the process to vary across time while remaining scale-dependent. The evolving dependence structure of $\{\mathbf{X}_t\}$ can therefore be summarized by the local wavelet spectral (LWS) matrix \citep{ombao2014}. For each scale $j$ and rescaled time $u\in(0,1)$, the LWS matrix is defined as
\begin{align}
\mathbf{S}_j(u)=\mathbf{V}_j(u)\mathbf{V}_j^\top(u).
\end{align}
The matrix $\mathbf{S}_j(u)$ is symmetric and positive semi-definite. Its $(p,q)$ entry, denoted by $S_j^{(p,q)}(u)$, represents the scale-specific cross-spectrum between channels $p$ and $q$ at rescaled time $u$, thereby characterizing the time-varying dependence structure of the process.

As a methodological precursor to Partial WaveCanCoh, we briefly recall the WaveCanCoh framework of \cite{wu2025wavelet}, which provides a localized measure of (unadjusted) association between two multivariate nonstationary processes. Let
$\mathbf{X}_t=\left(X_t^{(1)}, \ldots, X_t^{(P)}\right)^{\top}$
and
$\mathbf{Y}_t=\left(Y_t^{(1)}, \ldots, Y_t^{(Q)}\right)^{\top}$,
and define the joint multivariate locally stationary process $\{\mathbf{Z}_t\}_{t=1}^T$ with
$\mathbf{Z}_t=\left(\mathbf{X}_t^{\top}, \mathbf{Y}_t^{\top}\right)^{\top}$.
Then, at scale $j$ and rescaled time $u$, the corresponding LWS matrix of $\{\mathbf{Z}_t\}$ admits the block representation
\begin{align} 
\mathbf{S}_{j;\mathbf{ZZ}}(u)
=
\left[\begin{array}{cc}
\mathbf{S}_{j;\mathbf{XX}}(u) & \mathbf{S}_{j;\mathbf{XY}}(u) \\
\mathbf{S}_{j;\mathbf{YX}}(u) & \mathbf{S}_{j;\mathbf{YY}}(u)
\end{array}\right],
\end{align}
where $\mathbf{S}_{j;\mathbf{XX}}(u)$ and $\mathbf{S}_{j;\mathbf{YY}}(u)$ are the auto-LWS matrices of $\{\mathbf{X}_t\}$ and $\{\mathbf{Y}_t\}$, respectively, while $\mathbf{S}_{j;\mathbf{XY}}(u)$ and $\mathbf{S}_{j;\mathbf{YX}}(u)$ describe their localized cross-spectral dependence. Based on this block LWS structure, WaveCanCoh is defined \citep{wu2025wavelet} as the strongest linear association between local linear combinations of $\{\mathbf{X}_t\}$ and $\{\mathbf{Y}_t\}$ at a given scale and rescaled time. Namely, WaveCanCoh between $\{\mathbf{X}_t\}$ and $\{\mathbf{Y}_t\}$ at scale $j$ and rescaled time $u$ is %defined as
\begin{align} \label{equ:WaveCanCoh}
\boldsymbol{\rho}_{j;\mathbf{XY}}(u)
=
\max_{\mathbf{a}_j(u),\,\mathbf{b}_j(u)}
\left\{
\mathbf{a}_j^{\top}(u)\mathbf{S}_{j;\mathbf{XY}}(u)\mathbf{b}_j(u)
\right\}^2,
\end{align}
subject to
\[
\mathbf{a}_j^{\top}(u)\mathbf{S}_{j;\mathbf{XX}}(u)\mathbf{a}_j(u)=1,
\qquad
\mathbf{b}_j^{\top}(u)\mathbf{S}_{j;\mathbf{YY}}(u)\mathbf{b}_j(u)=1,
\]
where $\mathbf{a}_j(u)\in\mathbb{R}^{P}$ and $\mathbf{b}_j(u)\in\mathbb{R}^{Q}$ are the localized canonical weight vectors.

\section{Partial wavelet canonical coherence} \label{sec:partial_wavecancoh}
Let us now move to the setting of interest, where the available data consist of multivariate nonstationary time series and we aim to quantify confounder-adjusted set-to-set association. In this section, we introduce the proposed partial wavelet canonical coherence (Partial WaveCanCoh) framework. To facilitate intuition, we first provide a projection interpretation of the proposed construction, clarifying the logic of partialization in the wavelet domain. We then formally define Partial WaveCanCoh, discuss its key properties, and develop its (consistent) estimation procedure.

\subsection{Motivating the Partial WaveCanCoh construction}
\label{sec:partial_regression}

The goal of this section is to provide a complementary regression-based interpretation of our proposal, intended to clarify the meaning of partialization at a given scale and rescaled time. The formal definition and spectral-domain formulation of Partial WaveCanCoh are then given in Section~\ref{sec:proposal}.
% in terms of scale-specific subprocesses. This interpretation is
To this end, we introduce the notion of a scale-specific subprocess by adapting the subprocess construction proposed in \cite{wu2023multiscale} to a MvLSW process in equation~\eqref{equ: MvLSW}, and define the subprocess of $\{ \mathbf{X}_{t} \}_t$ as the scale-$j$ component $P$-variate locally stationary wavelet process $\{ \mathbf{X}_{j,t} \}_{j,t}$ with
$\mathbf{X}_{j,t}=(X_{j,t}^{(1)},\ldots,X_{j,t}^{(P)})^\top$ represented as
\begin{align}\label{eq:subproc}
    \mathbf{X}_{j,t}= \sum_{k \in \mathbb{Z}} \mathbf{V}_j(k/T)\psi_{j,k}(t)\boldsymbol{\xi}_{j,k}.
\end{align}
%following the MvLSW notation in equation~\eqref{equ: MvLSW}.
%where $\mathbf{V}_j(k/T)$, $\{\psi_{j,k}\}$, and $\{\boldsymbol{\xi}_{j,k}\}$ satisfy Definition~\ref{def:partialwcc} and its associated assumptions. 
Here, $\{X_{j,t}^{(p)}\}_t$ represents the scale-$j$ contribution to the original channel-$p$ process $\{X_{t}^{(p)}\}_t$, and equations~\eqref{equ: MvLSW} and~\eqref{eq:subproc} yield $X_{t}^{(p)}=\sum_{j=1}^{\infty} X_{j,t}^{(p)}$. Similar constructions applied to $\{\mathbf{Y}_t\}_t$ and $\{\mathbf{Z}_t\}_t$ give rise to their corresponding subprocesses, $\{\mathbf{Y}_{j,t}\}_{j,t}$ and $\{\mathbf{Z}_{j,t}\}_{j,t}$.

With this notation in place, the analogy with classical partial canonical correlation becomes transparent by recalling that in the classical setting, one removes from $\mathbf{X}$ and $\mathbf{Y}$ the linear component explained by $\mathbf{Z}$ and then studies the canonical correlation between the resulting residuals. Hence at a {\em fixed scale} $j$, a natural analogue is to interpret the subprocesses $\{\mathbf{X}_{j,t}\}_t$ and $\{\mathbf{Y}_{j,t}\}_t$ as admitting local linear projections onto $\{\mathbf{Z}_{j,t}\}_t$. 

Specifically, for a fixed scale $j$ and rescaled time $u=t/T$, one may represent
\begin{align}\label{eq:regrlsw}
 \mathbf{X}_{j,t} \approx \mathbf{B}^{(X)}_j(u)\mathbf{Z}_{j,t}+\boldsymbol{\varepsilon}^{(X)}_{j,t},
 \mbox{ and }
\mathbf{Y}_{j,t} \approx \mathbf{B}^{(Y)}_j(u)\mathbf{Z}_{j,t}+\boldsymbol{\varepsilon}^{(Y)}_{j,t},   
\end{align}
where $\approx$ denotes equality up to $\mathcal{O}(2^{-j}T^{-1})$, $\mathbf{B}^{(X)}_j(u)$ and $\mathbf{B}^{(Y)}_j(u)$ are scale- and time-dependent coefficient matrices, and $\{\boldsymbol{\varepsilon}^{(X)}_{j,t}\}_t$ and $\{\boldsymbol{\varepsilon}^{(Y)}_{j,t}\}_t$ denote the corresponding residual subprocesses after removing the linear contribution of $\{\mathbf{Z}_{j,t}\}_t$. The source of the order term tracks to \cite{wu2023multiscale}, who have shown that the connection between the subprocess covariance and spectral matrices is given by $$\operatorname{E}(\mathbf{X}_{j,t} \mathbf{Y}^T_{j,t}) \approx \mathbf{S}_{j; \mathbf{XY}}(u), \mbox{ for any subprocesses } \{\mathbf{X}_{j,t}\}_t, \, \{\mathbf{Y}_{j,t}\}_t.$$ Under this interpretation, $\mathbf{B}^{(X)}_j(u)$ and $\mathbf{B}^{(Y)}_j(u)$ play the role of local population least-squares coefficients and at the level of second-order structure, their analogues are
\begin{align*}
 \mathbf{B}^{(X)}_j(u)=\mathbf{S}_{j;\mathbf{XZ}}(u)\mathbf{S}_{j;\mathbf{ZZ}}^{-1}(u),
\qquad
\mathbf{B}^{(Y)}_j(u)=\mathbf{S}_{j;\mathbf{YZ}}(u)\mathbf{S}_{j;\mathbf{ZZ}}^{-1}(u),   
\end{align*}
provided that $\mathbf{S}_{j;\mathbf{ZZ}}(u)$ is nonsingular. 
Akin to Section~\ref{sec:partial_cca}, substituting the projection coefficients into the residualization step~\eqref{eq:regrlsw} leads to the residual second-order quantities of the type $\operatorname{E}(\boldsymbol{\varepsilon}^{(\cdotp)}_{j,t} \boldsymbol{\varepsilon}^{(\cdotp), \top}_{j,t})$ for process combinations $(\mathbf{X},\mathbf{X}), \,  (\mathbf{Y},\mathbf{Y}), \, (\mathbf{X},\mathbf{Y})$. Specifically, % These can be expressed as 
\begin{align*}
\mathbf{S}_{j;\mathbf{XX \cdot Z}}(u)
&=
\mathbf{S}_{j;\mathbf{XX}}(u)
-
\mathbf{S}_{j;\mathbf{XZ}}(u)\mathbf{S}_{j;\mathbf{ZZ}}^{-1}(u)\mathbf{S}_{j;\mathbf{ZX}}(u), \text{ and }   \\ 
   \mathbf{S}_{j;\mathbf{YY \cdot Z}}(u)
&=
\mathbf{S}_{j;\mathbf{YY}}(u)
-
\mathbf{S}_{j;\mathbf{YZ}}(u)\mathbf{S}_{j;\mathbf{ZZ}}^{-1}(u)\mathbf{S}_{j;\mathbf{ZY}}(u), 
\end{align*}
while the residual cross-dependence between $\{\mathbf{X}_{j,t}\}_t$ and $\{\mathbf{Y}_{j,t}\}_t$ is represented by
\begin{align*}
    \mathbf{S}_{j;\mathbf{XY \cdot Z}}(u)
=
\mathbf{S}_{j;\mathbf{XY}}(u)
-
\mathbf{S}_{j;\mathbf{XZ}}(u)\mathbf{S}_{j;\mathbf{ZZ}}^{-1}(u)\mathbf{S}_{j;\mathbf{ZY}}(u).
\end{align*}
We refer to these as the {\em partial local wavelet spectral matrices}. Consequently, the proposed partialization may be understood as removing, within each scale separately, the component of the local second-order dependence structure of $\{\mathbf{X}_{j,t}\}$ and $\{\mathbf{Y}_{j,t}\}$ that is linearly explained by $\{\mathbf{Z}_{j,t}\}$. Partial WaveCanCoh then measures the canonical coherence between the resulting residual subprocesses in this scale-specific second-order sense. %introduced in the main text. 

We stress that the above display is not intended as a strict generative regression model for the subprocesses. Rather, it provides a projection-based interpretation of the Schur-complement construction underlying
$\mathbf{S}_{j;\mathbf{XX \cdot Z}}(u)$,
$\mathbf{S}_{j;\mathbf{YY \cdot Z}}(u)$, and
$\mathbf{S}_{j;\mathbf{XY \cdot Z}}(u)$ shown in next part. In particular, the proposed procedure performs partialization within each scale $j$ separately: the contribution of the scale-$j$ subprocess of $\{\mathbf{Z}_t\}$ is removed from the scale-$j$ subprocesses of $\{\mathbf{X}_t\}$ and $\{\mathbf{Y}_t\}$. In this way, Partial WaveCanCoh provides a direct scale-specific analogue of the residualization step in classical partial canonical correlation analysis.

\subsection{Proposed partial wavelet canonical coherence (Partial WaveCanCoh)} \label{sec:proposal}

We now introduce the proposed partial wavelet canonical coherence (Partial WaveCanCoh) measure, which extends classical partial canonical correlation analysis to multivariate nonstationary time series by making use of the MvLSW framework reviewed in Section~\ref{sec:wavecancoh}. 

Suppose we have two sets of multivariate nonstationary time series $\{\mathbf{X}_t\}$ and $\{\mathbf{Y}_t\}$ of potentially different dimensions,  namely $\mathbf{X}_t=\left(X_t^{(1)}, \ldots, X_{t}^{(P)}\right)^{\top}$ and $\mathbf{Y}_t=\left(Y_t^{(1)},  \ldots, Y_{t}^{(Q)}\right)^{\top}$, as well as a third confounding process, $\{\mathbf{Z}_t\}$ with $\mathbf{Z}_t=\left(Z_t^{(1)}, \ldots, Z_{t}^{(K)}\right)^{\top}$. 
%for which we adopt the locally stationary wavelet processes as in equation \eqref{equ: MvLSW}, %locally stationary wavelet 

We define the joint $(P+Q+K)$-dimensional process $\{\mathbf{W}_t\}$ with 
$\mathbf{W}_t
=
\bigl(
\mathbf{X}_t^\top,\,
\mathbf{Y}_t^\top,\,
\mathbf{Z}_t^\top
\bigr)^\top,
$ for which we adopt the multivariate locally stationary wavelet process structure as in equation \eqref{equ: MvLSW}, 
whose (symmetrical) spectrum at scale $j$ and rescaled time $u$ is % $\mathbf{S}_j(u)$ %obtained as
\begin{align} \label{equ:cross-spec}
    \mathbf{S}_{j}(u) =  \left[\begin{array}{lll}
\mathbf{S}_{j; \mathbf{X X}}(u) & \mathbf{S}_{j; \mathbf{X Y}}(u)  & \mathbf{S}_{j; \mathbf{X Z}}(u)\\
\mathbf{S}_{j; \mathbf{Y X}}(u) & \mathbf{S}_{j; \mathbf{Y Y}}(u) & \mathbf{S}_{j; \mathbf{YZ}}(u) \\
\mathbf{S}_{j; \mathbf{Z X}}(u) & \mathbf{S}_{j; \mathbf{Z Y}}(u) & \mathbf{S}_{j; \mathbf{ZZ}}(u) \\
\end{array}\right]_{(P + Q + K) \times (P + Q + K)}
\end{align}
% In equation \ref{equ:cross-spec}, the main diagonal blocks $\mathbf{S}_{j; \mathbf{X X}}(u)$ $(P \times P)$, $\mathbf{S}_{j; \mathbf{YY}}(u)$ $(Q \times Q)$ and $\mathbf{S}_{j; \mathbf{ZZ}}(u)$ $(K \times K)$ denote the auto-LWS matrices of  the $\{\mathbf{X}_t\}$, $\{\mathbf{Y}_t\}$ and $\{\mathbf{Z}_t\}$   processes, respectively, while the off-diagonal blocks denote their cross-LWS matrices. 
where $\mathbf{S}_{j;\mathbf{XX}}(u)$, $\mathbf{S}_{j;\mathbf{YY}}(u)$, and $\mathbf{S}_{j;\mathbf{ZZ}}(u)$ are the auto-local wavelet spectral matrices of $\{\mathbf{X}_t\}$, $\{\mathbf{Y}_t\}$, and $\{\mathbf{Z}_t\}$, respectively, and the off-diagonal blocks denote the corresponding cross-local wavelet spectral matrices. The LWS of the meta-process $\{\mathbf{W}_t \}$ provides a time-localized contribution of individual channels and cross-channels to the process covariance. 

\begin{remark} For now, we assume that $\mathbf{S}_{j;\mathbf{ZZ}}(u)$ is nonsingular for all scale-time pairs $(j,u)$ of interest and that obtaining its inverse is feasible. Section~\ref{sec:high_dim} deals with infeasible setups.\end{remark}
In analogy with classical partial canonical correlation analysis, we remove the linear contribution of $\{\mathbf{Z}_t\}$ from the local second-order dependence structure of $\{\mathbf{X}_t\}$ and $\{\mathbf{Y}_t\}$ by means of the partial local wavelet spectral matrices, as follows.

%\noindent \textbf{Definition 1 (Localized Scale-specific Partial Wavelet Canonical Coherence).}
\begin{definition}\label{def:partialwcc}
Let $\{\bigl(
\mathbf{X}_t^\top,\,
\mathbf{Y}_t^\top,\,
\mathbf{Z}_t^\top
\bigr)^\top\}_{t=1}^T$ be a jointly multivariate locally stationary time series, where $\mathbf{X}_t=\left(X_t^{(1)}, \ldots, X_{t}^{(P)}\right)^{\top}$, $\mathbf{Y}_t=\left(Y_t^{(1)}, \ldots, Y_{t}^{(Q)}\right)^{\top}$, and $\mathbf{Z}_t=\left(Z_t^{(1)}, \ldots, Z_{t}^{(K)}\right)^{\top}$. The localized scale-specific partial wavelet canonical coherence (Partial WaveCanCoh) between processes $\{\mathbf{X}_t\}$ and $\{\mathbf{Y}_t\}$ after removing the linear contribution of $\{\mathbf{Z}_t\}$, at scale $j$ and rescaled time $u$, is defined as
\begin{align} \label{equ: partial WaveCanCoh}
    \boldsymbol{\rho}_{j;\mathbf{XY \cdot Z}}(u)
    =
    \max_{\mathbf{a}_j(u), \mathbf{b}_j(u)}
    \left\{
    \mathbf{a}_j^{\top}(u)\mathbf{S}_{j;\mathbf{XY \cdot Z}}(u)\mathbf{b}_j(u)
    \right\}^{2},
\end{align}
subject to the constraints
\[
\mathbf{a}_j^{\top}(u)\mathbf{S}_{j;\mathbf{XX \cdot Z}}(u)\mathbf{a}_j(u)=1
\qquad \text{and} \qquad
\mathbf{b}_j^{\top}(u)\mathbf{S}_{j;\mathbf{YY \cdot Z}}(u)\mathbf{b}_j(u)=1.
\]
In the above, $\mathbf{a}_j^{\top}(u)=\left(a_j^{(p)}(u)\right)_{p=1}^{P}$ is a $1\times P$ vector and $\mathbf{b}_j^{\top}(u)=\left(b_j^{(q)}(u)\right)_{q=1}^{Q}$ is a $1\times Q$ vector, representing the localized partial canonical vectors associated with $\{\mathbf{X}_t\}$ and $\{\mathbf{Y}_t\}$, respectively. Furthermore, the partial local spectral matrices are defined as 
\begin{align}
\mathbf{S}_{j;\mathbf{XX}\cdot\mathbf{Z}}(u)
&:=
\mathbf{S}_{j;\mathbf{XX}}(u)
-
\mathbf{S}_{j;\mathbf{XZ}}(u)
\mathbf{S}_{j;\mathbf{ZZ}}^{-1}(u)
\mathbf{S}_{j;\mathbf{ZX}}(u),
\label{eq:partial_lws_xx}
\\
\mathbf{S}_{j;\mathbf{YY}\cdot\mathbf{Z}}(u)
&:=
\mathbf{S}_{j;\mathbf{YY}}(u)
-
\mathbf{S}_{j;\mathbf{YZ}}(u)
\mathbf{S}_{j;\mathbf{ZZ}}^{-1}(u)
\mathbf{S}_{j;\mathbf{ZY}}(u),
\label{eq:partial_lws_yy}
\\
\mathbf{S}_{j;\mathbf{XY}\cdot\mathbf{Z}}(u)
&:=
\mathbf{S}_{j;\mathbf{XY}}(u)
-
\mathbf{S}_{j;\mathbf{XZ}}(u)
\mathbf{S}_{j;\mathbf{ZZ}}^{-1}(u)
\mathbf{S}_{j;\mathbf{ZY}}(u),
\label{eq:partial_lws_xy}
\end{align}
and represent the residual scale-dependent second-order structure between $\{\mathbf{X}_t\}$ and $\{\mathbf{Y}_t\}$ after accounting for the contribution of $\{\mathbf{Z}_t\}$ at scale $j$ and rescaled time $u$. (Note the relationship is symmetrical as $
\mathbf{S}_{j;\mathbf{YX}\cdot\mathbf{Z}}(u)
:=\mathbf{S}_{j;\mathbf{YX}}(u)
-
\mathbf{S}_{j;\mathbf{YZ}}(u)
\mathbf{S}_{j;\mathbf{ZZ}}^{-1}(u)
\mathbf{S}_{j;\mathbf{ZX}}(u)= \mathbf{S}_{j;\mathbf{XY}\cdot\mathbf{Z}}^{\top}(u)
\label{eq:partial_lws_yx}$.)
\end{definition}

%\noindent \textit{\textbf{Remark 1}. {}
\begin{remark}The function $\boldsymbol{\rho}_{j;\mathbf{XY \cdot Z}}(\cdotp)$ provides a time-varying measure of the direct linear association between $\{\mathbf{X}_t\}$ and $\{\mathbf{Y}_t\}$ at scale $j$, after accounting for the linear effect of $\{\mathbf{Z}_t\}$. It takes values in $[0,1]$, where values close to 1 indicate strong residual linear dependence and values close to 0 indicate weak or negligible residual linear dependence. Moreover, the entries ${a}_j^{(p)}(\cdotp)$ and ${b}_j^{(q)}(\cdotp)$ describe the time-localized contributions of the $p$th channel in $\{\mathbf{X}_t\}$ and the $q$th channel in $\{\mathbf{Y}_t\}$ to the resulting partial wavelet canonical coherence.\end{remark}

An equivalent characterization of the optimization problem in Definition~\ref{def:partialwcc} can be obtained through an eigenvalue--eigenvector formulation. Specifically, the localized partial canonical vectors $\mathbf{a}_j(\cdotp)$ and $\mathbf{b}_j(\cdotp)$ are obtained by the eigendecompositions associated with the matrices
\begin{align}
    \mathbf{N}_{j;\mathbf{a}\mid \mathbf{Z}}(u)
    &=
    \mathbf{S}_{j;\mathbf{XX \cdot Z}}^{-1}(u)
    \mathbf{S}_{j;\mathbf{XY \cdot Z}}(u)
    \mathbf{S}_{j;\mathbf{YY \cdot Z}}^{-1}(u)
    \mathbf{S}_{j;\mathbf{YX \cdot Z}}(u),
    \label{equ:partial_a} 
    \\
    \mathbf{N}_{j;\mathbf{b}\mid \mathbf{Z}}(u)
    &=
    \mathbf{S}_{j;\mathbf{YY \cdot Z}}^{-1}(u)
    \mathbf{S}_{j;\mathbf{YX \cdot Z}}(u)
    \mathbf{S}_{j;\mathbf{XX \cdot Z}}^{-1}(u)
    \mathbf{S}_{j;\mathbf{XY \cdot Z}}(u).
    \label{equ:partial_b}
\end{align}

\begin{proposition}
    Let $\Lambda_{j;\mathbf{a}\mid \mathbf{Z}}^{(k)}(u)$ denote the $k$th largest eigenvalue of $\mathbf{N}_{j;\mathbf{a}\mid \mathbf{Z}}(u)$ in \eqref{equ:partial_a}, and let $\Lambda_{j;\mathbf{b}\mid \mathbf{Z}}^{(l)}(u)$ denote the $l$th largest eigenvalue of $\mathbf{N}_{j;\mathbf{b}\mid \mathbf{Z}}(u)$ in \eqref{equ:partial_b}, for $k,l=1,\ldots,\min(P,Q)$. The matrices $\mathbf{N}_{j;\mathbf{a}\mid \mathbf{Z}}(u)$ and $\mathbf{N}_{j;\mathbf{b}\mid \mathbf{Z}}(u)$ have the same non-zero eigenvalues, and denoting the largest eigenvalue as $
\Lambda_{j;\mathbf{XY \cdot Z}}^{(1)}(u)
:=
\Lambda_{j;\mathbf{a}\mid \mathbf{Z}}^{(1)}(u)
=
\Lambda_{j;\mathbf{b}\mid \mathbf{Z}}^{(1)}(u),
$
then the localized scale-specific partial wavelet canonical coherence in \eqref{equ: partial WaveCanCoh} admits the equivalent representation
\begin{align}
    \boldsymbol{\rho}_{j;\mathbf{XY \cdot Z}}(u)
    =
    \Lambda_{j;\mathbf{XY \cdot Z}}^{(1)}(u).
    \label{equ:partial_rho_lambda}
\end{align}
Moreover, the eigenvectors of $\mathbf{N}_{j;\mathbf{a}\mid \mathbf{Z}}(u)$ and $\mathbf{N}_{j;\mathbf{b}\mid \mathbf{Z}}(u)$ corresponding to the largest eigenvalue $\Lambda_{j;\mathbf{XY \cdot Z}}^{(1)}(u)$ determine the localized partial canonical vectors $\mathbf{a}_j(u)$, $\mathbf{b}_j(u)$ for $\{\mathbf{X}_t\}$ and $\{\mathbf{Y}_t\}$, respectively, at given scale $j$ and rescaled time $u$.
\end{proposition}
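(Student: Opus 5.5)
The plan is to reduce the problem to the classical canonical correlation argument of \cite{hotelling1936relations}, applied to the partial local spectral matrices of Definition~\ref{def:partialwcc}. Throughout, $(j,u)$ is fixed and dropped from the notation. We write $A=\mathbf{S}_{j;\mathbf{XX\cdot Z}}$, $B=\mathbf{S}_{j;\mathbf{YY\cdot Z}}$ and $C=\mathbf{S}_{j;\mathbf{XY\cdot Z}}$, so that $\mathbf{S}_{j;\mathbf{YX\cdot Z}}=C^\top$.

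The first step is to check that $A$ and $B$ are symmetric positive definite. Each is a Schur complement of $\mathbf{S}_{j;\mathbf{ZZ}}$ in the positive semi-definite matrix $\mathbf{S}_j(u)=\mathbf{V}_j\mathbf{V}_j^\top$ restricted to the $(\mathbf{X},\mathbf{Z})$ or $(\mathbf{Y},\mathbf{Z})$ blocks, so each is positive semi-definite. Definiteness, which is needed for the inverses in \eqref{equ:partial_a}--\eqref{equ:partial_b}, is the implicit assumption of the statement; I would assume it explicitly, as the Remark does for $\mathbf{S}_{j;\mathbf{ZZ}}$. Under this assumption the symmetric square roots $A^{1/2}$ and $B^{1/2}$ exist and are invertible.

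Next, substitute $\tilde{\mathbf a}=A^{1/2}\mathbf a$ and $\tilde{\mathbf b}=B^{1/2}\mathbf b$. The problem \eqref{equ: partial WaveCanCoh} becomes the maximisation of $(\tilde{\mathbf a}^\top M\tilde{\mathbf b})^2$ over unit vectors, where $M=A^{-1/2}CB^{-1/2}$. By Cauchy--Schwarz in $\tilde{\mathbf a}$, followed by a Rayleigh quotient bound in $\tilde{\mathbf b}$, the maximum equals $\sigma_1(M)^2$, the largest eigenvalue of $MM^\top$ (equivalently of $M^\top M$). It is attained at the leading left and right singular vectors $\tilde{\mathbf a}_1,\tilde{\mathbf b}_1$.

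Then relate this to \eqref{equ:partial_a}--\eqref{equ:partial_b} by similarity:
\begin{align*}
\mathbf{N}_{j;\mathbf a\mid\mathbf Z}=A^{-1}CB^{-1}C^\top=A^{-1/2}(MM^\top)A^{1/2},\qquad \mathbf{N}_{j;\mathbf b\mid\mathbf Z}=B^{-1/2}(M^\top M)B^{1/2}.
\end{align*}
So $\mathbf{N}_{j;\mathbf a\mid\mathbf Z}$ has the same eigenvalues as $MM^\top$, and $\mathbf{N}_{j;\mathbf b\mid\mathbf Z}$ the same as $M^\top M$. These eigenvalues are real and nonnegative. Since $MM^\top$ and $M^\top M$ share their nonzero eigenvalues (the squared singular values of $M$, at most $\min(P,Q)$ of them), the two $\mathbf N$ matrices share their nonzero eigenvalues. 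This gives $\Lambda^{(1)}_{j;\mathbf{XY\cdot Z}}=\sigma_1(M)^2=\boldsymbol\rho_{j;\mathbf{XY\cdot Z}}(u)$.

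For the eigenvectors, the similarity shows that $\mathbf a=A^{-1/2}\tilde{\mathbf a}_1$ is an eigenvector of $\mathbf{N}_{j;\mathbf a\mid\mathbf Z}$ for $\Lambda^{(1)}$, and likewise $\mathbf b=B^{-1/2}\tilde{\mathbf b}_1$ for $\mathbf{N}_{j;\mathbf b\mid\mathbf Z}$. These vectors are normalised so that $\mathbf a^\top A\mathbf a=\mathbf b^\top B\mathbf b=1$, and signs are chosen so that $\mathbf a^\top C\mathbf b\ge0$. Alternatively, one can write the Lagrangian stationarity equations $C\mathbf b=\lambda A\mathbf a$ and $C^\top\mathbf a=\mu B\mathbf b$, deduce $\lambda=\mu=\mathbf a^\top C\mathbf b$, and eliminate $\mathbf b$ to obtain $\mathbf N_{j;\mathbf a\mid\mathbf Z}\mathbf a=\lambda^2\mathbf a$.

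The main obstacle is not computational. It lies in justifying invertibility of the partial matrices, and in handling a repeated top eigenvalue: in that case the canonical vectors are determined only up to the eigenspace, so the statement "determine" should be read as "can be chosen from".
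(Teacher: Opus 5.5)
Your argument is correct, but it takes a different route from the paper. The paper's proof is close to the Lagrangian alternative you sketch at the end. It works with the Lagrangian of the squared objective $\frac12(\mathbf a^\top C\mathbf b)^2$ under the two quadratic constraints. From the first-order conditions it gets $\lambda_1=\lambda_2=(\mathbf a^\top C\mathbf b)^2=:\lambda$. Assuming $\lambda\neq 0$, it then derives the coupling $\mathbf a\propto A^{-1}C\mathbf b$, $\mathbf b\propto B^{-1}C^\top\mathbf a$, and by substitution the eigen-equations for $\mathbf N_{j;\mathbf a\mid\mathbf Z}$ and $\mathbf N_{j;\mathbf b\mid\mathbf Z}$. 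From these it concludes that the maximum is the largest eigenvalue.

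That argument only characterises stationary points. It shows that every stationary value is a common eigenvalue of the two $\mathbf N$ matrices, but leaves three things implicit:
that a maximiser exists, which follows from compactness of the constraint ellipsoids and requires $A,B\succ 0$;
that the largest eigenvalue is actually attained;
and that all nonzero eigenvalues, not just stationary values, are shared.
It also sets the case $\lambda=0$ aside.

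Your whitening and singular-value argument settles all of this at once. Cauchy--Schwarz plus the Rayleigh quotient gives a global upper bound, with attainment at the leading singular pair of $M$. The similarities $\mathbf N_{j;\mathbf a\mid\mathbf Z}=A^{-1/2}MM^\top A^{1/2}$ and $\mathbf N_{j;\mathbf b\mid\mathbf Z}=B^{-1/2}M^\top MB^{1/2}$ give the full shared nonzero spectrum, and show the eigenvalues are real and nonnegative. The degenerate case $C=\mathbf 0$ is covered too.

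What the paper's route buys is directness. It produces the eigen-equations in exactly the form \eqref{equ:partial_a}--\eqref{equ:partial_b} and makes the pairing between $\mathbf a$ and $\mathbf b$ explicit. That pairing is what your caveat about a repeated top eigenvalue really needs. In that case $\mathbf a$ and $\mathbf b$ cannot be drawn independently from their eigenspaces; they must satisfy $\mathbf b\propto B^{-1}C^\top\mathbf a$. Your construction from a singular pair $(\tilde{\mathbf a}_1,\tilde{\mathbf b}_1)$ enforces this automatically.
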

% since $\mathbf{N}_{j;\mathbf{b}\mid \mathbf{Z}}(u)=\mathbf{N}_{j;\mathbf{a}\mid \mathbf{Z}}^{\top}(u)$

 \textit{Proof: See Appendix \ref{app:proof}}.

\subsection{Estimation procedure}
\label{subsec:estimation}

We next consider the estimation of the proposed Partial WaveCanCoh measure in equation~\eqref{equ: partial WaveCanCoh}, or equivalently~\eqref{equ:partial_rho_lambda}. 
\begin{comment}
Let
$\{\mathbf{X}_t\}$, $\{\mathbf{Y}_t\}$, and $\{\mathbf{Z}_t\}$ be nonstationary time series that can jointly modeled as an MvLSW concatenated process, defined as
\[
\mathbf{W}_t=\bigl(\mathbf{X}_t^\top,\mathbf{Y}_t^\top,\mathbf{Z}_t^\top\bigr)^\top.
\]
\end{comment}
As is evident from the preceding development, the central ingredient is the joint local wavelet spectral matrix of $\{\mathbf{W}_t=\bigl(\mathbf{X}_t^\top,\mathbf{Y}_t^\top,\mathbf{Z}_t^\top\bigr)^\top\}$. Indeed, once an estimator of the joint LWS matrix defined in equation \eqref{equ:cross-spec} is available, all partial local wavelet spectral matrices required for Partial WaveCanCoh follow immediately through the Schur-complement constructions introduced in equations~\eqref{eq:partial_lws_xx},~\eqref{eq:partial_lws_yy} and~\eqref{eq:partial_lws_xy}. The estimation of Partial WaveCanCoh therefore reduces to the estimation of the joint second-order wavelet structure of the meta-process $\{\mathbf{W}_t\}$.

In this paper, the local wavelet spectral matrix is estimated following the methodology of \cite{nason2000wavelet, ombao2014}, which has become standard in the locally stationary wavelet literature and has been used extensively for a variety of data contexts. We give a brief summary of the main steps here and refer the reader to these references for a more detailed treatment.

Let $\widehat{\mathbf{S}}_{j,k}$ denote the estimator of the joint LWS matrix at scale $j=1, \ldots, J=\log_2(T)$ and rescaled time $k/T$, defined by the corrected smoothed periodogram
\begin{align}
    \widehat{\mathbf{S}}_{j,k}
    =
    \sum_{l=1}^{J} (\mathbf{A}^{-1})_{jl}\widetilde{\mathbf{I}}_{l,k},
    \label{eq:joint_lws_est}
\end{align}
where $A_{jl}=\langle \Psi_j,\Psi_l\rangle$ is the $(j,l)$ entry of the inner product matrix $\mathbf{A}$ of the discrete autocorrelation wavelets \citep{nason2000wavelet}, and the matrix 
\begin{align}
    \widetilde{\mathbf{I}}_{l,k}
    =
    \frac{1}{2M+1}\sum_{m=-M}^{M}\mathbf{I}_{l,k+m}
    \label{eq:smoothed_periodogram}
\end{align}
denotes the smoothed periodogram at scale $l$. Here, $M$ is the half-width of the rectangular smoothing kernel, typically taken to be $M=\lfloor \sqrt{T} \rfloor$, and $k=\lfloor uT\rfloor$ indexes the time location corresponding to rescaled time $u$. The scale-$l$ and time $k$ raw periodogram matrix $\mathbf{I}_{l,k}$ is constructed from the empirical wavelet coefficient vector $\mathbf{d}_{l,k}$ of the joint process as
\begin{align}
    \mathbf{I}_{l,k}
    =
    \mathbf{d}_{l,k}\mathbf{d}_{l,k}^{\top}, \mbox{ where }
    \mathbf{d}_{l,k}
    =
    \sum_{t=1}^{T}\mathbf{W}_t\psi_{l,k}(t).
    \label{eq:raw_periodogram_joint}
\end{align}
Under the usual asymptotic conditions $T,M\to\infty$ and $M/T\to 0$, $\widehat{\mathbf{S}}_{j,k}$ is a consistent estimator of the joint LWS matrix $\mathbf{S}_j(u)$ at each $(j, k=\lfloor uT\rfloor)$ pair \citep{ombao2014}.

From the $\widehat{\mathbf{S}}_{j,k}$ partitioning mirroring~\eqref{equ:cross-spec}, we extract the estimated block matrices
$\widehat{\mathbf{S}}_{j;\mathbf{XX}}(u)$,
$\widehat{\mathbf{S}}_{j;\mathbf{YY}}(u)$,
$\widehat{\mathbf{S}}_{j;\mathbf{ZZ}}(u)$,
$\widehat{\mathbf{S}}_{j;\mathbf{XY}}(u)$,
$\widehat{\mathbf{S}}_{j;\mathbf{XZ}}(u)$, and
$\widehat{\mathbf{S}}_{j;\mathbf{YZ}}(u)$ (and their transposed versions), and obtain the estimated partial local wavelet spectral matrices as
\begin{align}
    \widehat{\mathbf{S}}_{j;\mathbf{XX \cdot Z}}(u)
    &=
    \widehat{\mathbf{S}}_{j;\mathbf{XX}}(u)
    -
    \widehat{\mathbf{S}}_{j;\mathbf{XZ}}(u)
    \widehat{\mathbf{S}}_{j;\mathbf{ZZ}}^{-1}(u)
    \widehat{\mathbf{S}}_{j;\mathbf{ZX}}(u), \label{eq:localpartialspec1}
    \\
    \widehat{\mathbf{S}}_{j;\mathbf{YY \cdot Z}}(u)
    &=
    \widehat{\mathbf{S}}_{j;\mathbf{YY}}(u)
    -
    \widehat{\mathbf{S}}_{j;\mathbf{YZ}}(u)
    \widehat{\mathbf{S}}_{j;\mathbf{ZZ}}^{-1}(u)
    \widehat{\mathbf{S}}_{j;\mathbf{ZY}}(u),\label{eq:localpartialspec2}
    \\
    \widehat{\mathbf{S}}_{j;\mathbf{XY \cdot Z}}(u)
    &=
    \widehat{\mathbf{S}}_{j;\mathbf{XY}}(u)
    -
    \widehat{\mathbf{S}}_{j;\mathbf{XZ}}(u)
    \widehat{\mathbf{S}}_{j;\mathbf{ZZ}}^{-1}(u)
    \widehat{\mathbf{S}}_{j;\mathbf{ZY}}(u).\label{eq:localpartialspec3}
    %\\
    %\widehat{\mathbf{S}}_{j;\mathbf{YX \cdot Z}}(u)
    %&=
    %\widehat{\mathbf{S}}_{j;\mathbf{YX}}(u)
    %-
    %\widehat{\mathbf{S}}_{j;\mathbf{YZ}}(u)
    %\widehat{\mathbf{S}}_{j;\mathbf{ZZ}}^{-1}(u)
    %\widehat{\mathbf{S}}_{j;\mathbf{ZX}}(u).
\end{align}

\begin{proposition}
Let $\{\mathbf{W}_t=\bigl(\mathbf{X}_t^\top,\mathbf{Y}_t^\top,\mathbf{Z}_t^\top\bigr)^\top\}$ be a multivariate locally stationary process with uniform absolutely summable autocovariance and whose true unknown spectral structure $\{\mathbf{S}_j(\cdotp)\}_j$ may be estimated using equation~\eqref{eq:joint_lws_est}, and its partial
local spectral matrices $\{\mathbf{S}_{j;\mathbf{AB\cdot Z}}(\cdotp)\}_j$ may be estimated as in equations~\eqref{eq:localpartialspec1},~\eqref{eq:localpartialspec2} and~\eqref{eq:localpartialspec3}, for any $\mathbf{A},\mathbf{B} \in \{ \mathbf{X}, \mathbf{Y}\}$. 

Then, we propose the estimator of the localized scale-specific partial wavelet canonical coherence that measures the direct association between the processes $\{\mathbf{X}_t\}$ and $\{\mathbf{Y}_t\}$ having accounted for the linear effect of the confounding process $\{\mathbf{Z}_t\}$ to be% at each scale $j$ and rescaled time $u$,
\begin{align}
    \widehat{\boldsymbol{\rho}}_{j;\mathbf{XY \cdot Z}}(u)
    =
    \widehat{\Lambda}_{j;\mathbf{XY \cdot Z}}^{(1)}(u),
    \label{eq:partial_wavecancoh_est}
\end{align}
where $\widehat{\Lambda}_{j;\mathbf{XY \cdot Z}}^{(1)}(u)$ is the largest eigenvalue of
\begin{align}
    \widehat{\mathbf{N}}_{j;\mathbf{a}\mid \mathbf{Z}}(u)
    &=
    \widehat{\mathbf{S}}_{j;\mathbf{XX \cdot Z}}^{-1}(u)
    \widehat{\mathbf{S}}_{j;\mathbf{XY \cdot Z}}(u)
    \widehat{\mathbf{S}}_{j;\mathbf{YY \cdot Z}}^{-1}(u)
    \widehat{\mathbf{S}}_{j;\mathbf{YX \cdot Z}}(u),
    \label{eq:partial_M_hat_a}
    \\
    \widehat{\mathbf{N}}_{j;\mathbf{b}\mid \mathbf{Z}}(u)
    &=
    \widehat{\mathbf{S}}_{j;\mathbf{YY \cdot Z}}^{-1}(u)
    \widehat{\mathbf{S}}_{j;\mathbf{YX \cdot Z}}(u)
    \widehat{\mathbf{S}}_{j;\mathbf{XX \cdot Z}}^{-1}(u)
    \widehat{\mathbf{S}}_{j;\mathbf{XY \cdot Z}}(u).
    \label{eq:partial_M_hat_b}
\end{align}
The corresponding estimators of the localized partial canonical vectors, $\widehat{\mathbf{a}}_j(u)$ and $\widehat{\mathbf{b}}_j(u)$, are the eigenvectors of $\widehat{\mathbf{N}}_{j;\mathbf{a}\cdot \mathbf{Z}}(u)$ and $\widehat{\mathbf{N}}_{j;\mathbf{b}\cdot \mathbf{Z}}(u)$ associated with $\widehat{\Lambda}_{j;\mathbf{XY \cdot Z}}^{(1)}(u)$.

As the number of time points $T \rightarrow \infty$ and the smoothing
parameter $M \rightarrow \infty$ with $M/T \rightarrow 0$, the proposed estimators are consistent with the true quantities they aim to approximate. 
\end{proposition}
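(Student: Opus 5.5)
The argument is a continuous-mapping (plug-in) argument built on the consistency of the joint LWS estimator $\widehat{\mathbf{S}}_{j,k}$ in \eqref{eq:joint_lws_est}. As stated after \eqref{eq:raw_periodogram_joint}, under the uniform absolute summability assumption and $T,M\to\infty$ with $M/T\to 0$, the estimator satisfies $\widehat{\mathbf{S}}_{j,\lfloor uT\rfloor}\to\mathbf{S}_j(u)$ in probability \citep{ombao2014}. The matrix dimension $P+Q+K$ is fixed, so entrywise convergence is equivalent to convergence in any matrix norm. Hence every block $\widehat{\mathbf{S}}_{j;\mathbf{AB}}(u)$, for $\mathbf{A},\mathbf{B}\in\{\mathbf{X},\mathbf{Y},\mathbf{Z}\}$, converges in probability to $\mathbf{S}_{j;\mathbf{AB}}(u)$. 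The plan is then to push this convergence through each of the maps defining $\widehat{\boldsymbol{\rho}}_{j;\mathbf{XY\cdot Z}}(u)$ and the canonical vectors.

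\textbf{Step 1: partial spectra.} Matrix inversion is continuous on the open set of nonsingular matrices, and $\mathbf{S}_{j;\mathbf{ZZ}}(u)$ is assumed nonsingular (Remark 1). So with probability tending to one $\widehat{\mathbf{S}}_{j;\mathbf{ZZ}}(u)$ is invertible, and $\widehat{\mathbf{S}}_{j;\mathbf{ZZ}}^{-1}(u)\to\mathbf{S}_{j;\mathbf{ZZ}}^{-1}(u)$ in probability. Products and sums are continuous, so the continuous mapping theorem gives consistency of \eqref{eq:localpartialspec1}--\eqref{eq:localpartialspec3} for the Schur complements \eqref{eq:partial_lws_xx}--\eqref{eq:partial_lws_xy}.

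\textbf{Step 2: the matrices $\widehat{\mathbf{N}}$.} I add the natural regularity condition that $\mathbf{S}_{j;\mathbf{XX\cdot Z}}(u)$ and $\mathbf{S}_{j;\mathbf{YY\cdot Z}}(u)$ are positive definite; this is also implicit in Definition~\ref{def:partialwcc}. A second application of the inversion-continuity argument then gives $\widehat{\mathbf{N}}_{j;\mathbf{a}\mid\mathbf{Z}}(u)\to\mathbf{N}_{j;\mathbf{a}\mid\mathbf{Z}}(u)$ and likewise for $\mathbf{b}$.

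\textbf{Step 3: eigenvalues.} The matrix $\widehat{\mathbf{N}}$ is not symmetric, so I would not use a symmetric-matrix perturbation bound on it directly. Instead I would work with the similar symmetric matrix
\[
\widehat{\mathbf{C}}_j(u)=\widehat{\mathbf{S}}_{j;\mathbf{XX\cdot Z}}^{-1/2}(u)\,\widehat{\mathbf{S}}_{j;\mathbf{XY\cdot Z}}(u)\,\widehat{\mathbf{S}}_{j;\mathbf{YY\cdot Z}}^{-1}(u)\,\widehat{\mathbf{S}}_{j;\mathbf{YX\cdot Z}}(u)\,\widehat{\mathbf{S}}_{j;\mathbf{XX\cdot Z}}^{-1/2}(u).
\]
This matrix is similar to $\widehat{\mathbf{N}}_{j;\mathbf{a}\mid\mathbf{Z}}(u)$, so it has the same eigenvalues. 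The matrix square root is continuous on positive definite matrices, so $\widehat{\mathbf{C}}_j(u)$ is consistent for its population analogue. Weyl's inequality then gives $|\widehat{\Lambda}^{(1)}-\Lambda^{(1)}|\le\|\widehat{\mathbf{C}}_j(u)-\mathbf{C}_j(u)\|_2\to 0$ in probability. Combined with the preceding proposition, which gives $\boldsymbol{\rho}_{j;\mathbf{XY\cdot Z}}(u)=\Lambda^{(1)}_{j;\mathbf{XY\cdot Z}}(u)$, this yields consistency of \eqref{eq:partial_wavecancoh_est}.

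\textbf{Step 4: canonical vectors (main obstacle).} Eigenvectors are not continuous functions of the matrix when the top eigenvalue is repeated. I would therefore assume that $\Lambda^{(1)}_{j;\mathbf{XY\cdot Z}}(u)$ is simple, and fix the sign and the normalization $\mathbf{a}^\top\mathbf{S}_{j;\mathbf{XX\cdot Z}}(u)\mathbf{a}=1$. Under these conditions, the Davis--Kahan $\sin\Theta$ theorem applied to $\widehat{\mathbf{C}}_j(u)$ gives convergence of its leading eigenvector, with rate controlled by $\|\widehat{\mathbf{C}}_j(u)-\mathbf{C}_j(u)\|_2$ divided by the eigengap. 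Mapping back by $\widehat{\mathbf{a}}_j(u)=\widehat{\mathbf{S}}_{j;\mathbf{XX\cdot Z}}^{-1/2}(u)\,\widehat{\boldsymbol{\gamma}}$, where $\widehat{\boldsymbol{\gamma}}$ is that leading eigenvector, preserves consistency. The argument for $\widehat{\mathbf{b}}_j(u)$ is symmetric, using $\mathbf{b}\propto\mathbf{S}_{j;\mathbf{YY\cdot Z}}^{-1}(u)\,\mathbf{S}_{j;\mathbf{YX\cdot Z}}(u)\,\mathbf{a}$.

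The delicate points are exactly these identifiability and invertibility conditions, together with handling the events where an estimated matrix is singular; those events have probability tending to zero. Uniformity in $u$ is not claimed, since the statement is pointwise.
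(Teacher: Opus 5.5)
Your proposal is correct and follows the same plug-in route as the paper's proof: consistency of the joint corrected smoothed periodogram, then the continuous mapping theorem applied through the Schur complements, the inverses, and the matrices $\widehat{\mathbf{N}}_{j;\mathbf{a}\mid\mathbf{Z}}(u)$ and $\widehat{\mathbf{N}}_{j;\mathbf{b}\mid\mathbf{Z}}(u)$. The paper defers the eigenvalue and eigenvector step to ``arguments analogous to'' Knight et al.\ (2024), whereas you carry it out explicitly via the symmetrized matrix, Weyl's inequality and Davis--Kahan, and you state the conditions the paper leaves implicit: positive definiteness of the partial auto-spectra, a simple leading eigenvalue, and a sign/normalization convention for the canonical vectors.
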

 \textit{Proof: See Appendix \ref{app:proof}}.

In a nutshell, estimation of Partial WaveCanCoh proceeds by first estimating the joint second-order wavelet structure of the concatenated meta-process $\{\mathbf{W}_t=\bigl(\mathbf{X}_t^\top,\mathbf{Y}_t^\top,\mathbf{Z}_t^\top\bigr)^\top\}$, and then applying the partialization step in equations~\eqref{eq:partial_lws_xx}--\eqref{eq:partial_lws_xy} taken at the level of the estimated local wavelet spectral matrices. Under standard regularity conditions, including increasing sample size and appropriate temporal smoothing, the resulting estimators are consistent for their population partial canonical coherence and corresponding vectors counterparts, a property that we will also numerically illustrate in Section~\ref{sec:sim}. %\noindent\textit{Proof: See Appendix~\ref{app:estimation}.}

\section{Extension to high-dimensional confounders}
\label{sec:high_dim}

The developed framework assumes that the confounding process $\{\mathbf{Z}_t\}$ can be incorporated directly into the partialization step through the matrix $\mathbf{S}_{j;\mathbf{ZZ}}(u)$. In many applications, however, $\{\mathbf{Z}_t\}$ may be high-dimensional. This occurs naturally in neural, financial, and environmental settings, where a large number of auxiliary variables may simultaneously influence both processes $\{\mathbf{X}_t\}$ and $\{\mathbf{Y}_t\}$. In such cases, direct implementation of Partial WaveCanCoh may become unstable, since estimation of $\mathbf{S}_{j;\mathbf{ZZ}}(u)$ can be unreliable when its dimension $K$ is large relative to the available sample size, and the inversion of $\mathbf{S}_{j;\mathbf{ZZ}}(u)$ required for partialization may be ill-conditioned or infeasible. As a result, estimation of the partial local wavelet spectral matrices may deteriorate substantially.

To address this issue, we consider a factor-based extension of the proposed framework. Rather than conditioning on the full high-dimensional confounding process $\{\mathbf{Z}_t\}$, we replace it by a lower-dimensional representation that captures its dominant common variation. Specifically, suppose that
\begin{align}
    \mathbf{Z}_t = \mathbf{L}\mathbf{f}_t + \boldsymbol{\eta}_t,
    \label{eq:factor_model_Z}
\end{align}
where $\mathbf{f}_t \in \mathbb{R}^r$ is an $r$-dimensional latent factor process with $r \ll K$, $\mathbf{L}$ is a $K\times r$ loading matrix, and $\boldsymbol{\eta}_t$ is an idiosyncratic component. The factor process $\{\mathbf{f}_t\}$ is intended to summarize the low-dimensional component that carries the dominant confounding effect. % of $\{\mathbf{Z}_t\}$

In this paper, we estimate the factor process using principal component analysis (PCA) applied to the observed confounding series $\{\mathbf{Z}_t\}$; see, for example, \cite{jolliffe2002principal}.
% \textcolor{red}{why not LSW-PCA? (JCGS)}
We then replace the original confounding process $\{\mathbf{Z}_t\}$ by the resulting estimated factor process $\{\widehat{\mathbf{f}}_t\}$ and work with the reduced concatenated process
\begin{align*}
    \widetilde{\mathbf{W}}_t
    =
    \bigl(\mathbf{X}_t^\top,\mathbf{Y}_t^\top,\widehat{\mathbf{f}}_t^\top\bigr)^\top.
\end{align*}
The localized scale-specific partial wavelet canonical coherence is estimated using the corrected smoothed periodogram of $\{\widetilde{\mathbf{W}}_t\}$ following the strategy proposed in Section \ref{subsec:estimation}, and the subsequent partialization step is carried out with respect to the reduced factor process rather than the original high-dimensional confounders. For example, we use
\begin{align}
    \mathbf{S}_{j;\mathbf{XX}\cdot \widehat{\mathbf{f}}}(u)
    =
    \mathbf{S}_{j;\mathbf{XX}}(u)
    -
    \mathbf{S}_{j;\mathbf{X}\widehat{\mathbf{f}}}(u)
    \mathbf{S}_{j;\widehat{\mathbf{f}}\widehat{\mathbf{f}}}^{-1}(u)
    \mathbf{S}_{j;\widehat{\mathbf{f}}\mathbf{X}}(u),
    \label{eq:partial_factor_xx}
\end{align}
with analogous expressions for
$\mathbf{S}_{j;\mathbf{YY}\cdot \widehat{\mathbf{f}}}(u)$,
$\mathbf{S}_{j;\mathbf{XY}\cdot \widehat{\mathbf{f}}}(u)$, and
$\mathbf{S}_{j;\mathbf{YX}\cdot \widehat{\mathbf{f}}}(u)$. The resulting estimated Partial WavCanCoh measure may be interpreted as the direct scale-specific association between $\{\mathbf{X}_t\}$ and $\{\mathbf{Y}_t\}$ after removing the contribution explained by the dominant low-dimensional structure present in $\{\mathbf{Z}_t\}$.

This extension is appealing for several reasons. First, it preserves the structure of the proposed Partial WaveCanCoh framework: after factor extraction, all subsequent steps remain unchanged. Second, it improves numerical stability by reducing the dimension of the matrix to be inverted in the partialization step. Third, it is well suited to applications in which a large collection of observed confounders is in fact driven by a relatively small number of common sources. In such cases, conditioning on the leading principal components may be both more stable and more interpretable than conditioning on all observed confounders.% individually.

Naturally, the effectiveness of this strategy depends on whether the confounding effect of $\{\mathbf{Z}_t\}$ is primarily driven by a low-rank common component. If relevant confounding information is instead concentrated in highly localized idiosyncratic components, the PCA-based reduction may yield only partial adjustment. Nevertheless, as a practical extension, the proposed PCA-based approach offers an effective and computationally scalable solution for high-dimensional confounding while remaining fully compatible with the wavelet-based nonstationary framework developed in this paper. Other factor-extraction methods for dependent data may also be incorporated in the same manner, but are not pursued here.

Algorithm \ref{alg:alg_partial_hd} summarizes the proposed estimation procedure for the localized partial wavelet canonical coherence. The PCA step is optional and is mainly intended for settings in which the confounder dimension is large or the estimation of $\mathbf{S}_{j;\mathbf{ZZ}}(u)$ is unstable.

\begin{algorithm}[htbp]
\caption{Partial WaveCanCoh estimation with optional PCA reduction}
\label{alg:alg_partial_hd}
\begin{algorithmic}
\State Observe $\mathbf{X}_t=\left(X_t^{(1)}, \ldots, X_{t}^{(P)}\right)^{\top}$,
$\mathbf{Y}_t=\left(Y_t^{(1)}, \ldots, Y_{t}^{(Q)}\right)^{\top}$, and
$\mathbf{Z}_t=\left(Z_t^{(1)}, \ldots, Z_{t}^{(K)}\right)^{\top}$ for $t=1,\ldots,T$.

\State \textbf{1. Optional PCA reduction:}
if $K$ is large, apply principal component analysis to $\{\mathbf{Z}_t\}$ and retain the leading $r$ components to form a low-dimensional factor process $\{\widehat{\mathbf{f}}_t\}$; otherwise retain $\{\mathbf{Z}_t\}$ directly.

\State \textbf{2. Fuse:}
construct the joint process
$\widetilde{\mathbf{W}}_t=\left(\mathbf{X}_t^\top,\mathbf{Y}_t^\top,\tilde{\mathbf{Z}}_t^\top\right)^\top$,
where $\tilde{\mathbf{Z}}_t:=\widehat{\mathbf{f}}_t$ if PCA reduction is used, and $\tilde{\mathbf{Z}}_t:=\mathbf{Z}_t$ otherwise.

\State \textbf{3. Spectral estimation:}
estimate the joint LWS matrix of $\{\widetilde{\mathbf{W}}_t\}$ using \eqref{eq:joint_lws_est} at each scale $j=1, \ldots, J$ and rescaled time $u=t/T$, and extract using the structure in~\eqref{equ:cross-spec} the block matrices %^\ast
$\widehat{\mathbf{S}}_{j;\mathbf{XX}}(u)$,
$\widehat{\mathbf{S}}_{j;\mathbf{YY}}(u)$,
$\widehat{\mathbf{S}}_{j;\mathbf{X}\tilde{\mathbf{Z}}}(u)$,
$\widehat{\mathbf{S}}_{j;\mathbf{Y}\tilde{\mathbf{Z}}}(u)$, and
$\widehat{\mathbf{S}}_{j;\mathbf{\tilde{\mathbf{Z}}\tilde{\mathbf{Z}}}}(u)$.

\State \textbf{4. Partialization:}
compute
\begin{align*}
\widehat{\mathbf{S}}_{j;\mathbf{XX}\cdot \tilde{\mathbf{Z}}}(u)
&=
\widehat{\mathbf{S}}_{j;\mathbf{XX}}(u)
-
\widehat{\mathbf{S}}_{j;\mathbf{X}\tilde{\mathbf{Z}}}(u)
\widehat{\mathbf{S}}_{j;\mathbf{\tilde{\mathbf{Z}}\tilde{\mathbf{Z}}}}^{-1}(u)
\widehat{\mathbf{S}}_{j;\mathbf{\tilde{\mathbf{Z}}X}}(u),\\
\widehat{\mathbf{S}}_{j;\mathbf{YY}\cdot \tilde{\mathbf{Z}}}(u)
&=
\widehat{\mathbf{S}}_{j;\mathbf{YY}}(u)
-
\widehat{\mathbf{S}}_{j;\mathbf{Y\tilde{\mathbf{Z}}}}(u)
\widehat{\mathbf{S}}_{j;\mathbf{\tilde{\mathbf{Z}}\tilde{\mathbf{Z}}}}^{-1}(u)
\widehat{\mathbf{S}}_{j;\mathbf{\tilde{\mathbf{Z}}Y}}(u),\\
\widehat{\mathbf{S}}_{j;\mathbf{XY}\cdot \tilde{\mathbf{Z}}}(u)
&=
\widehat{\mathbf{S}}_{j;\mathbf{XY}}(u)
-
\widehat{\mathbf{S}}_{j;\mathbf{X\tilde{\mathbf{Z}}}}(u)
\widehat{\mathbf{S}}_{j;\mathbf{\tilde{\mathbf{Z}}\tilde{\mathbf{Z}}}}^{-1}(u)
\widehat{\mathbf{S}}_{j;\mathbf{\tilde{\mathbf{Z}}Y}}(u),
\end{align*}
together with $\widehat{\mathbf{S}}_{j;\mathbf{YX}\cdot \tilde{\mathbf{Z}}}(u)$.

\State \textbf{5. Eigendecomposition:}
form
$\widehat{\mathbf{N}}_{j;\mathbf{a}\mid \tilde{\mathbf{Z}}}(u)$ and
$\widehat{\mathbf{N}}_{j;\mathbf{b}\mid \tilde{\mathbf{Z}}}(u)$ using~\eqref{eq:partial_M_hat_a} and~\eqref{eq:partial_M_hat_b}, then
obtain their common largest eigenvalue
$\widehat{\Lambda}_{j;\mathbf{XY}\cdot \tilde{\mathbf{Z}}}^{(1)}(u)$, and set
$\widehat{\boldsymbol{\rho}}_{j;\mathbf{XY}\cdot \tilde{\mathbf{Z}}}(u)
=
\widehat{\Lambda}_{j;\mathbf{XY}\cdot \tilde{\mathbf{Z}}}^{(1)}(u)$ to be the estimated (localized) {\em partial wavelet canonical coherence}. The associated eigenvectors give the estimated (localized) {\em partial canonical vectors}.
\end{algorithmic}
\end{algorithm}

\section{Simulation Study}\label{sec:sim}

We conduct simulation studies to evaluate the empirical performance of the proposed Partial WaveCanCoh framework. The objectives are to assess whether the method can recover the direct dependence between two multivariate nonstationary processes in the presence of confounding, and to examine its finite-sample behavior under a range of representative data-generating settings. Particular attention is given to estimation accuracy, stability, and robustness to changes in the dimension and structure of the confounding process.

We simulate a joint MvLSW meta-process
$\mathbf{W}_t = \bigl(\mathbf{X}_t^\top, \mathbf{Y}_t^\top, \mathbf{Z}_t^\top\bigr)^\top$,
whose components are $\mathbf{X}_t = \bigl(X_t^{(1)}, \ldots, X_t^{(P)}\bigr)^\top$,
$\mathbf{Y}_t = \bigl(Y_t^{(1)}, \ldots, Y_t^{(Q)}\bigr)^\top$, and
$\mathbf{Z}_t = \bigl(Z_t^{(1)}, \ldots, Z_t^{(K)}\bigr)^\top$ for $t = 1, \ldots, T$, with a joint dependence structure determined by the prescribed multivariate evolutionary wavelet spectrum matrix $\mathbf{S}_j(\cdotp)$ across scales $j=1, \ldots, J=\log_2(T)$, partitioned as defined in \eqref{equ:cross-spec}. This formulation allows us to control the direct and indirect dependence among $\{\mathbf{X}_t\}$, $\{\mathbf{Y}_t\}$, and $\{\mathbf{Z}_t\}$, and therefore provides a transparent setting for evaluating the proposed methodology. Detailed specifications of the spectral constructions are given in Appendix~\ref{app:simulation}.

In the baseline setting, we take $T=1024$, $P=4$, $Q=3$, and $K=5$. We consider three scenarios. First, we study a setting in which $\{\mathbf{X}_t\}$ and $\{\mathbf{Y}_t\}$ have no direct dependence but are marginally associated through the confounding process $\{\mathbf{Z}_t\}$. In this case, the true Partial WaveCanCoh is identically zero, and the goal is to determine whether the proposed method correctly recovers the absence of direct association after adjustment for $\{\mathbf{Z}_t\}$. Since zero is a boundary point of the nonnegative coherence measure, we calibrate the estimator using an empirical null reference distribution. Specifically, we generate 5000 independent datasets from the same no-direct-dependence mechanism and define $q^{\mathrm{null}}_{0.95,j}(u)$ as the empirical 95th percentile of the resulting partial WaveCanCoh estimates at each rescaled time point $u=t/T$ and scale $j$. We further assess calibration by computing the pointwise exceedance proportion
\begin{equation}
    \frac{1}{R}\sum_{r=1}^{R}
    \mathbf{1}
    \left\{
    \widehat{\rho}^{(r)}_{j;\mathbf{XY}\cdot {\mathbf{Z}}}(u)
    >
    q^{\mathrm{null}}_{0.95,j}(u)
    \right\},
\end{equation}
where $R=1000$ denotes the  number of independent evaluation replicates. Under the null, this empirical Type I error rate should remain close to the nominal 5\% level; the corresponding pointwise Type I error curve confirms this (see  Figure \ref{fig:type1_null} in Appendix \ref{app:simulation}). 

Second, we let $\{\mathbf{X}_t\}$ and $\{\mathbf{Y}_t\}$ exhibit time-varying direct dependence while the confounding effect of $\{\mathbf{Z}_t\}$ is present. This scenario is used to evaluate whether the  estimator proposed in Section~\ref{subsec:estimation} can recover the underlying time-varying Partial WaveCanCoh. 

Third, we consider higher-dimensional confounding settings with $K=10$ and $K=20$. In these experiments, we implement the dimension-reduction procedure in Section \ref{sec:high_dim} to assess the robustness of the estimation framework in Algorithm~\ref{alg:alg_partial_hd}  when the confounding process is high-dimensional. Throughout, the data are generated using Haar wavelets. We impose nontrivial spectral structure at scale $j=1$; experiments at other scales lead to qualitatively similar conclusions (see Appendix~\ref{app:simulation} for scale-design details, with Figures~\ref{fig1:app_comparision} and~\ref{fig2:app_high_dim} for corresponding estimates).  Unless otherwise stated, all reported estimated curves are Monte Carlo averages over $R=1000$ independent replicates, and estimation uncertainty in the non-null settings is quantified using Wald-type 95\% confidence intervals.
\begin{figure}[htbp]
  \centering
  \begin{minipage}[b]{0.49\textwidth}
    \centering
    \includegraphics[width=\linewidth,height=6cm]{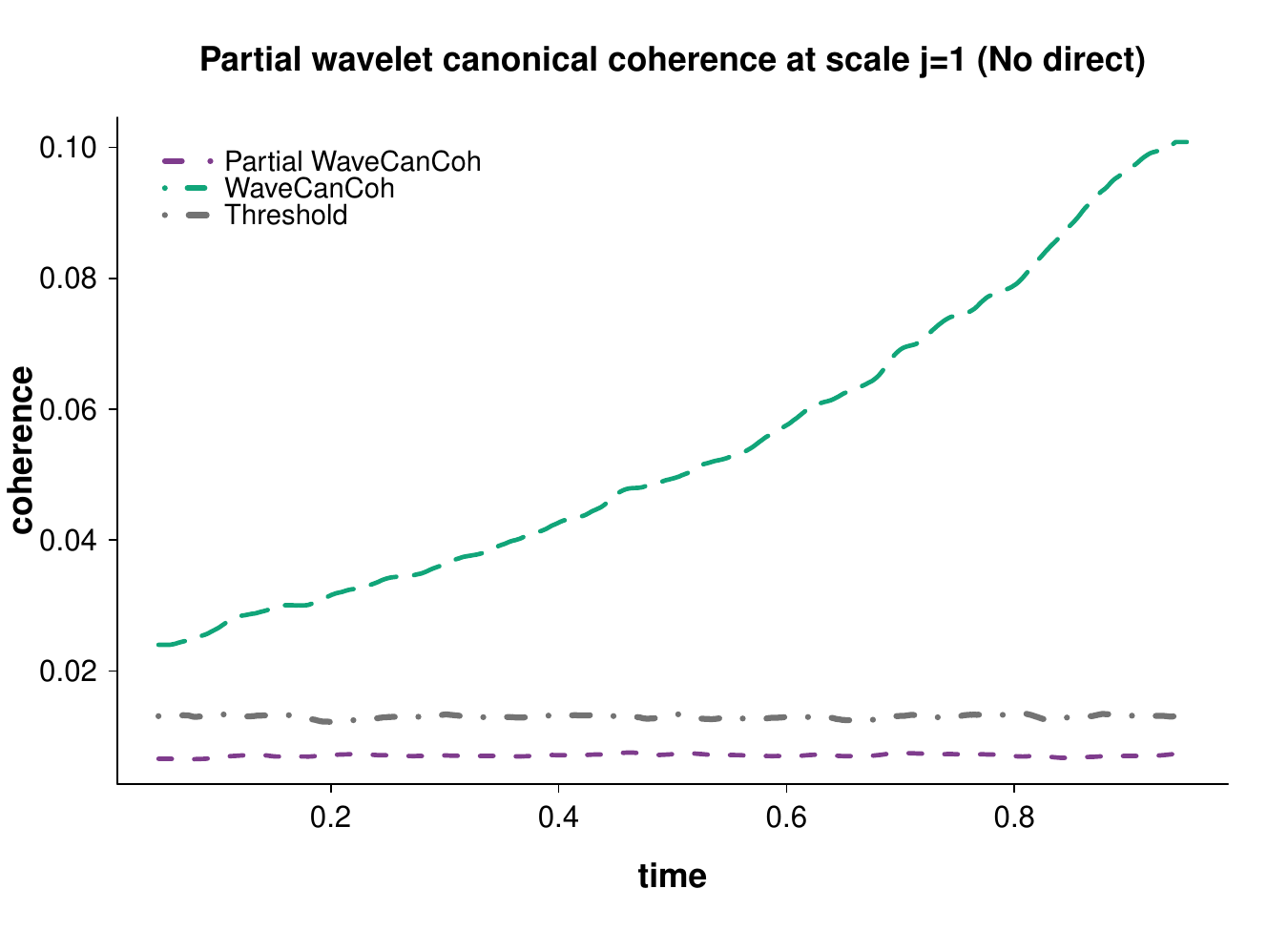}
  \end{minipage}
  \hfill
  \begin{minipage}[b]{0.49\textwidth}
    \centering
    \includegraphics[width=\linewidth,height=6cm]{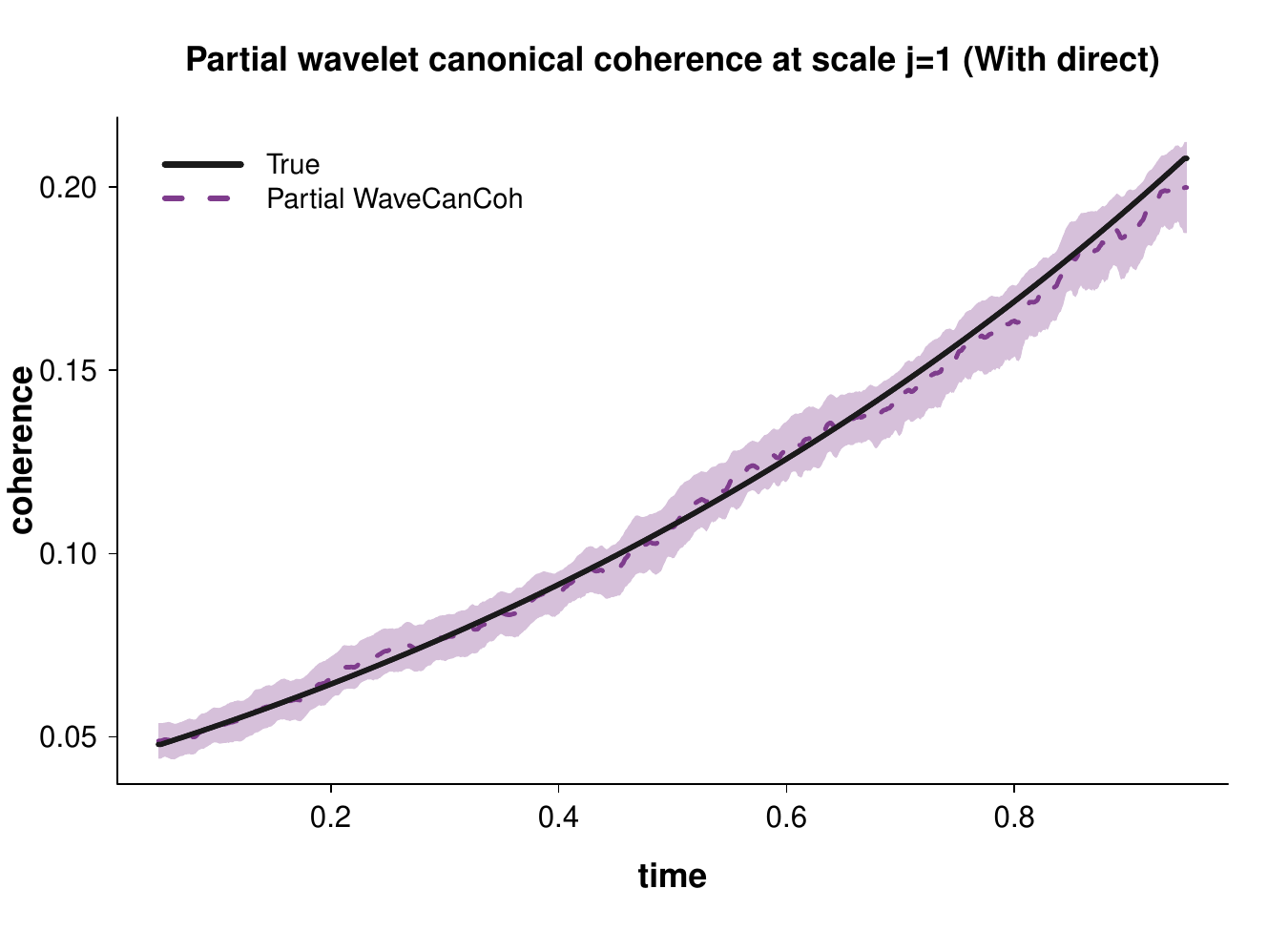}
  \end{minipage}
  \caption{Estimated partial wavelet canonical coherence (Partial WaveCanCoh) at scale $j=1$ under no direct dependence (left) and time-varying direct dependence (right). Estimated curves are Monte Carlo averages over 1000 independent replicates. The gray dashed curve in the left panel is the pointwise 95\% empirical null threshold. Shaded bands in the right panel denote Wald-type 95\% confidence intervals, with the true curve shown in black.}
  \label{fig1:simuation_comparison}
\end{figure}
\begin{figure}[htbp]
    \centering
    \includegraphics[width=0.6\linewidth]{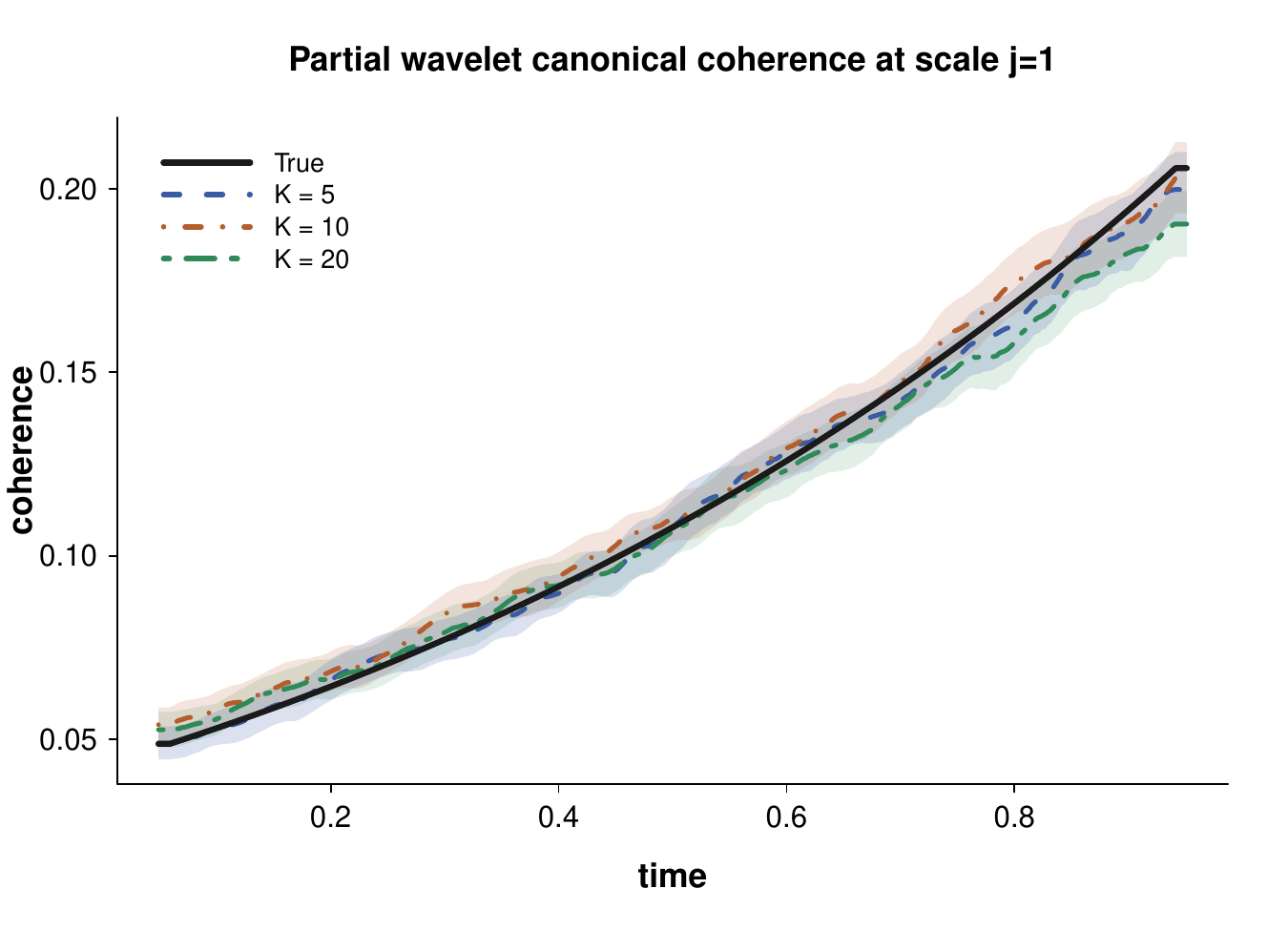}
    \caption{Partial wavelet canonical coherence at scale $j=1$ under high-dimensional confounding. The true curve is shown in black, and estimates are reported for $K=5,\, 10, \, 20$. Shaded bands denote Wald-type 95\% confidence intervals. The proposed method continues to recover the underlying time-varying direct association as the confounder dimension increases.}
    \label{fig2:sim_high_dim}
\end{figure}

Figure \ref{fig1:simuation_comparison} (left) shows that, when there is no direct coherence between $\{\mathbf{X}_t\}$ and $\{\mathbf{Y}_t\}$, the proposed Partial WaveCanCoh correctly identifies the target association as essentially zero and remains below the empirical null threshold. By contrast, the ordinary WaveCanCoh \citep{wu2025wavelet}, which does not adjust for the confounding effect of $\{\mathbf{Z}_t\}$, indicates a non-zero marginal dependence and therefore does not recover the true direct relationship. This comparison demonstrates that adjustment for confounding is necessary when the inferential target is the direct association between two multivariate nonstationary processes. Figure \ref{fig1:simuation_comparison} (right) further shows that, when time-varying direct coherence is present, the proposed estimator tracks the true target curve closely over time. Despite the confounding effect of $\{\mathbf{Z}_t\}$, the estimated partial wavelet canonical coherence remains in close agreement with the underlying direct association, indicating that the proposed procedure effectively removes the contribution of the confounding process and recovers the dependence of primary interest.

Figure \ref{fig2:sim_high_dim} summarizes the results in the presence of confounding with increasing dimension. To evaluate the proposed extension, we keep the direct coherence between $\{\mathbf{X}_t\}$ and $\{\mathbf{Y}_t\}$ fixed, while increasing the dimension of the confounding process $\{\mathbf{Z}_t\}$ and varying its dependence structure with $\{\mathbf{X}_t\}$ and $\{\mathbf{Y}_t\}$. Specifically, we consider $K=10$ and $K=20$. In each case, the high-dimensional confounding process is first reduced to dimension five 
% \textcolor{red}{why not use e.g. elbow rule?} 
using the PCA-based procedure described in Section \ref{sec:high_dim}, and the Partial WaveCanCoh is then estimated from the resulting low-dimensional representation. The results show that, despite the increased dimensionality and structural complexity of the confounding process, the proposed method continues to recover the underlying time-varying direct canonical coherence with good accuracy. These findings support the high-dimensional extension and indicate that the PCA-based reduction step provides an effective strategy for applying partial wavelet canonical coherence in the presence of high-dimensional confounders.

Overall, the simulation results show that the proposed framework can reliably isolate and estimate the direct time-varying association of interest, both in moderate- and high-dimensional confounding settings.

\section{Financial data analysis} \label{sec:data_analysis}
% To  further demonstrate the utility of Partial WaveCanCoh, we consider a real-data application in finance. Dependence across financial variables is of fundamental interest, but in practice it is often entangled with common market-wide and external confounding effects. Scale-specific Partial WaveCanCoh is particularly useful in this setting, as it allows us to isolate the direct association between two groups of financial variables after adjusting for confounding influences, and to examine how this association varies across signal components. We therefore apply the proposed method to daily financial time series, described below.
Modern equity markets are shaped by both direct linkages among asset groups and common movements induced by broader market conditions. We study daily returns of U.S. exchange-traded funds (ETFs) to examine whether two economically distinct asset groups exhibit direct dependence after external market influences have been removed. The first group consists of broad market, growth, small-cap, and technology-oriented ETFs, while the second consists of sector ETFs representing financially and macroeconomically sensitive segments of the U.S. equity market. This question is relevant for understanding cross-market transmission, diversification, and risk monitoring. Marginal dependence may overstate the direct linkage between the two groups when both are affected by common external factors, and the remaining direct dependence may vary over time and across investment horizons. Standard time-domain methods do not reveal such horizon-specific structure, while unadjusted coherence methods cannot separate direct dependence from confounding-induced association. These features make the data well suited for our Partial WaveCanCoh analysis, which targets the adjusted, time-varying and scale-specific canonical dependence between two sets of multivariate time series.

We analyze daily financial time series obtained from Yahoo Finance and accessed programmatically in R using the {\tt quantmod} package. For each asset, we use the Yahoo Finance adjusted price series and construct daily log returns. The assets are divided into three groups, denoted by $\{\mathbf{X}_t\}$, $\{\mathbf{Y}_t\}$, and $\{\mathbf{Z}_t\}$. All series are synchronized over common trading days, observations with missing values are removed, and the final sample is taken to be the most recent 1,024 common daily return observations, ranging from March 4, 2022 to April 2, 2026 (see Figure \ref{fig3:XtYt} for the realizations $\{\mathbf{X}_t\}$ and $\{\mathbf{Y}_t\}$). To examine the proposed method under increasingly high-dimensional confounding settings, we consider nested specifications of $\{\mathbf{Z}_t\}$ with dimensions $K=5, \,10, \, 20, \mbox{ and } 30$. In the higher-dimensional settings, PCA is applied to $\{\mathbf{Z}_t\}$ prior to the Partial WaveCanCoh analysis.

% \footnote{The title is not going to be appealing in an applied journal .. rewrite "Partial WaveCanCoh Analysis of Financial Time Series"}

% \footnote{This paper reads like: first, here is our method. second, this is our application. This "style" of writing is not the norm for an applied journal. Instead, it should be written this way: First, discuss the financial time series data that you want to analyze ... what do you want to test, what cross-time series relationships do you want to investigate. Second, state why this is important, what is the impact. Third, state why the current methods are not sufficient or not ideal. Finally, conduct the analysis using your method.}

% \footnote{Write a relatable story here .. something like ... We are interested, in particular, on the daily returns of the following time series data: ...  }

% \footnote{State why the other methods are not sufficient}

\begin{figure}[htbp]
  \centering
  \begin{minipage}[b]{0.49\textwidth}
    \centering
    \includegraphics[width=\linewidth,height=8cm,keepaspectratio]{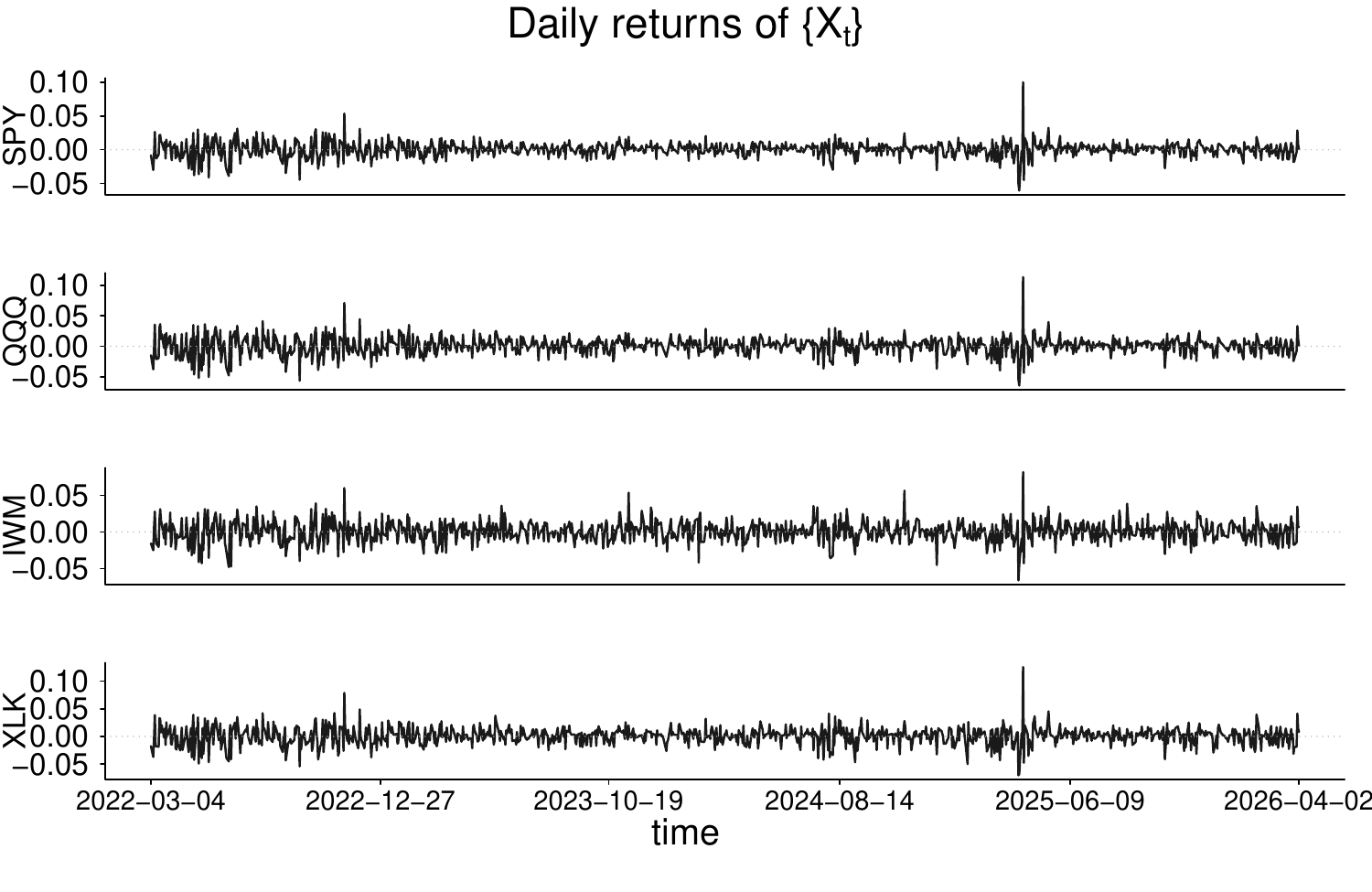}
  \end{minipage}
  \hfill
  \begin{minipage}[b]{0.49\textwidth}
    \centering
    \includegraphics[width=\linewidth,height=8cm,keepaspectratio]{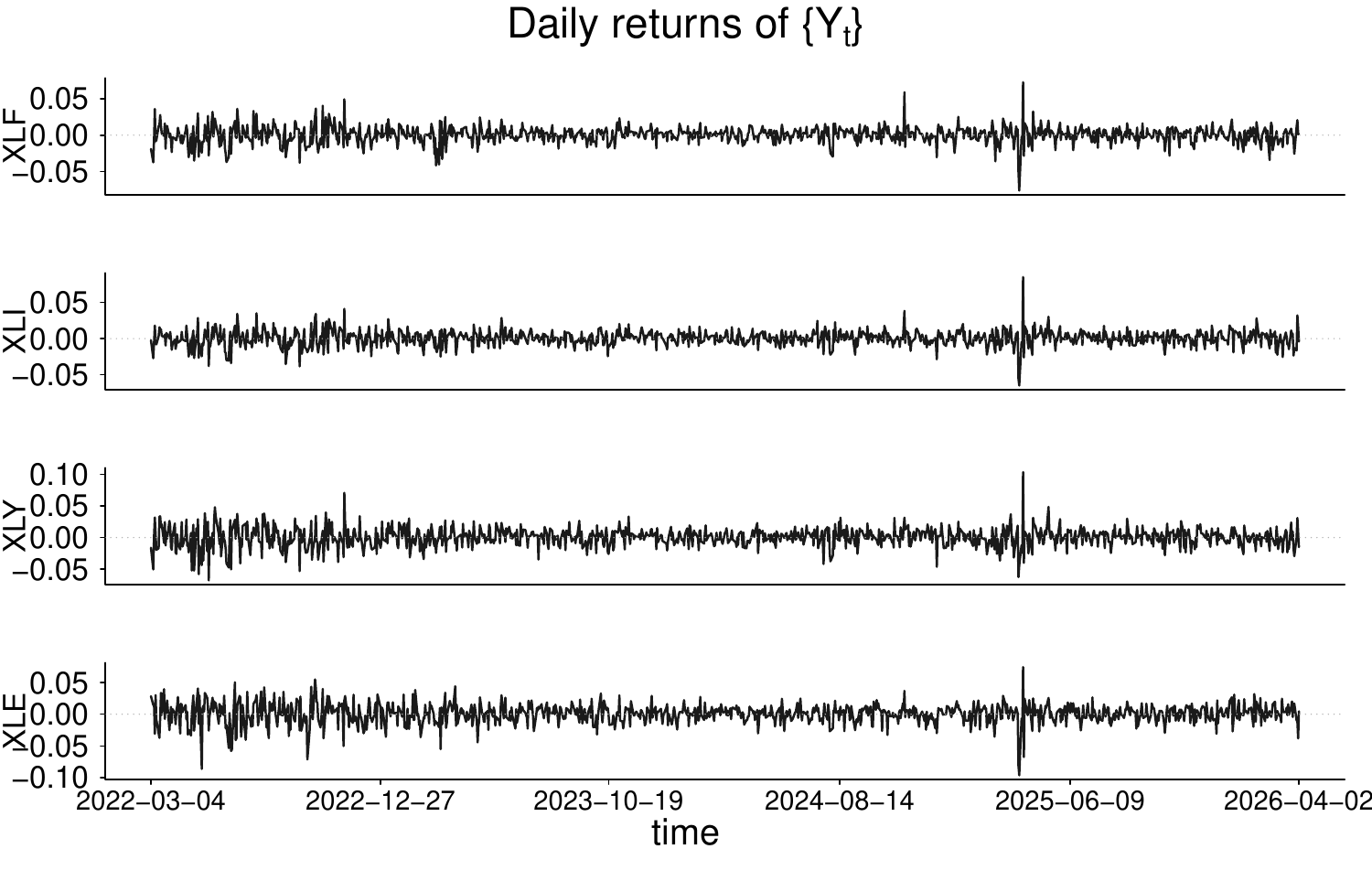}
  \end{minipage}
  \caption{Realizations of daily returns of $\{\mathbf{X}_t\}$ (left) and $\{\mathbf{Y}_t\}$ (right). The series fail the second-order stationarity test proposed in \cite{nason2013test}.}
  \label{fig3:XtYt}
\end{figure}

The sets $\{\mathbf{X}_t\}$ and $\{\mathbf{Y}_t\}$ are kept fixed and are chosen to represent two low-dimensional groups of U.S. equity market variables with distinct economic roles. Specifically, $\{\mathbf{X}_t\}$ consists of broad market, growth, small-cap, and technology-oriented exchange-traded funds (ETFs), while $\{\mathbf{Y}_t\}$ consists of sector ETFs representing financially and macroeconomically sensitive segments of the U.S. equity market. The confounding block $\{\mathbf{Z}_t\}$ is designed to capture external market influences. To preserve interpretability across dimensions, $\{\mathbf{Z}_t\}$ is constructed in a nested manner using U.S.-listed international equity ETFs, beginning with a small set of broad external market representatives and then progressively enlarging this set with additional country-level market ETFs. This yields increasingly rich confounding specifications while maintaining a coherent interpretation of $\{\mathbf{Z}_t\}$ as an external market block. %In the empirical analysis, 

In particular, we take $\{\mathbf{X}_t\}=(\mathrm{SPY},\mathrm{QQQ},\mathrm{IWM},\mathrm{XLK})^\top$ and $\{\mathbf{Y}_t\}=(\mathrm{XLF},\mathrm{XLI},\mathrm{XLY},\mathrm{XLE})^\top$, while the smallest confounding specification starts from $\{\mathbf{Z}_t\}=(\mathrm{EFA},\mathrm{EEM},\mathrm{EZU},\mathrm{EWJ},\mathrm{FXI})^\top$ and is then enlarged sequentially. The full composition of the sets $\{\mathbf{X}_t\}$, $\{\mathbf{Y}_t\}$, and the nested specifications of $\{\mathbf{Z}_t\}$ are reported in Tables \ref{tab:xy_groups} and \ref{tab:z_groups}, Appendix \ref{app:data_analysis}.
\begin{figure}[htbp]
    \centering
    
    \begin{minipage}[b]{0.48\textwidth}
        \centering
        \includegraphics[width=\linewidth]{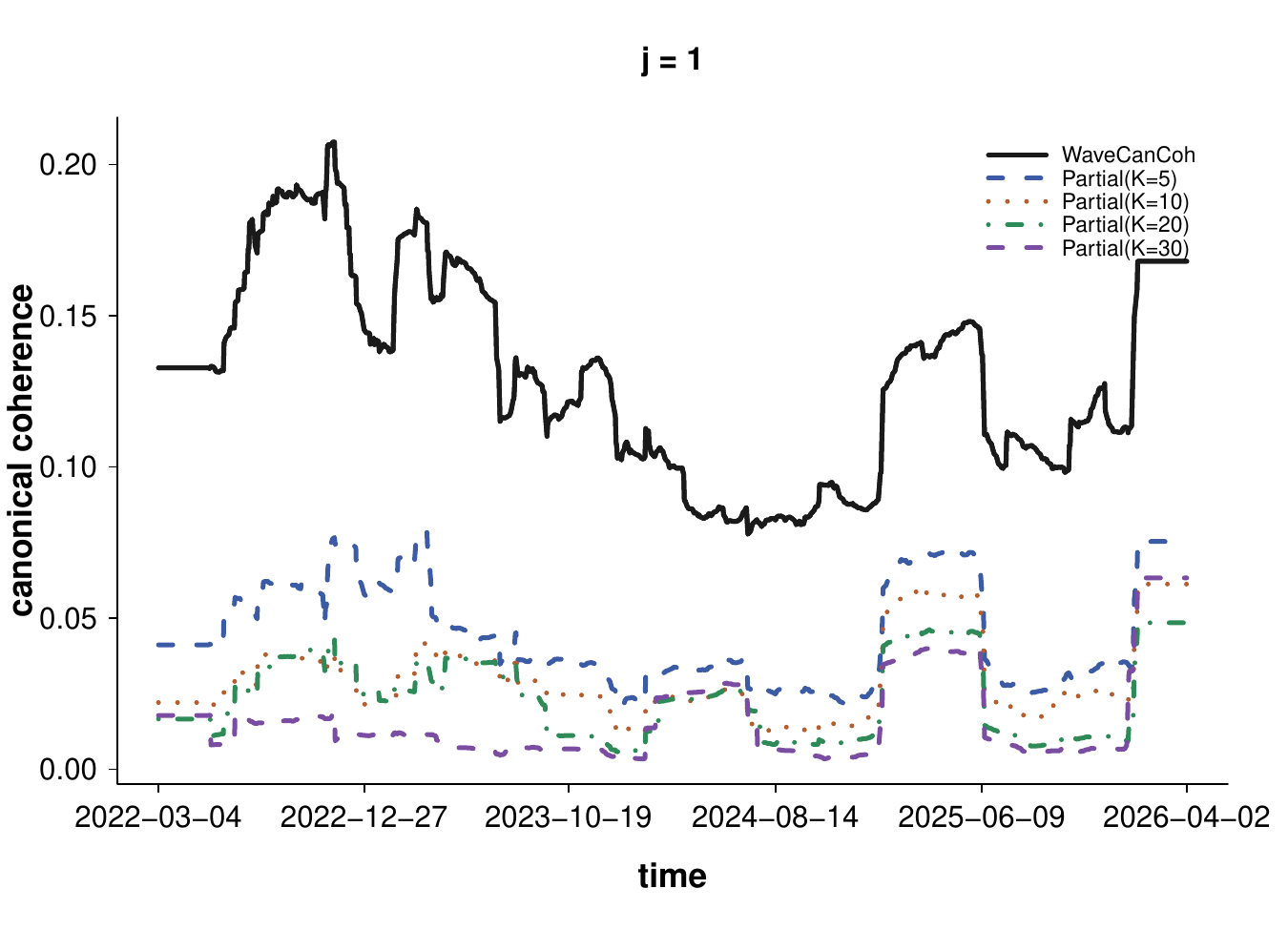}
    \end{minipage}
    \hfill
    \begin{minipage}[b]{0.48\textwidth}
        \centering
        \includegraphics[width=\linewidth]{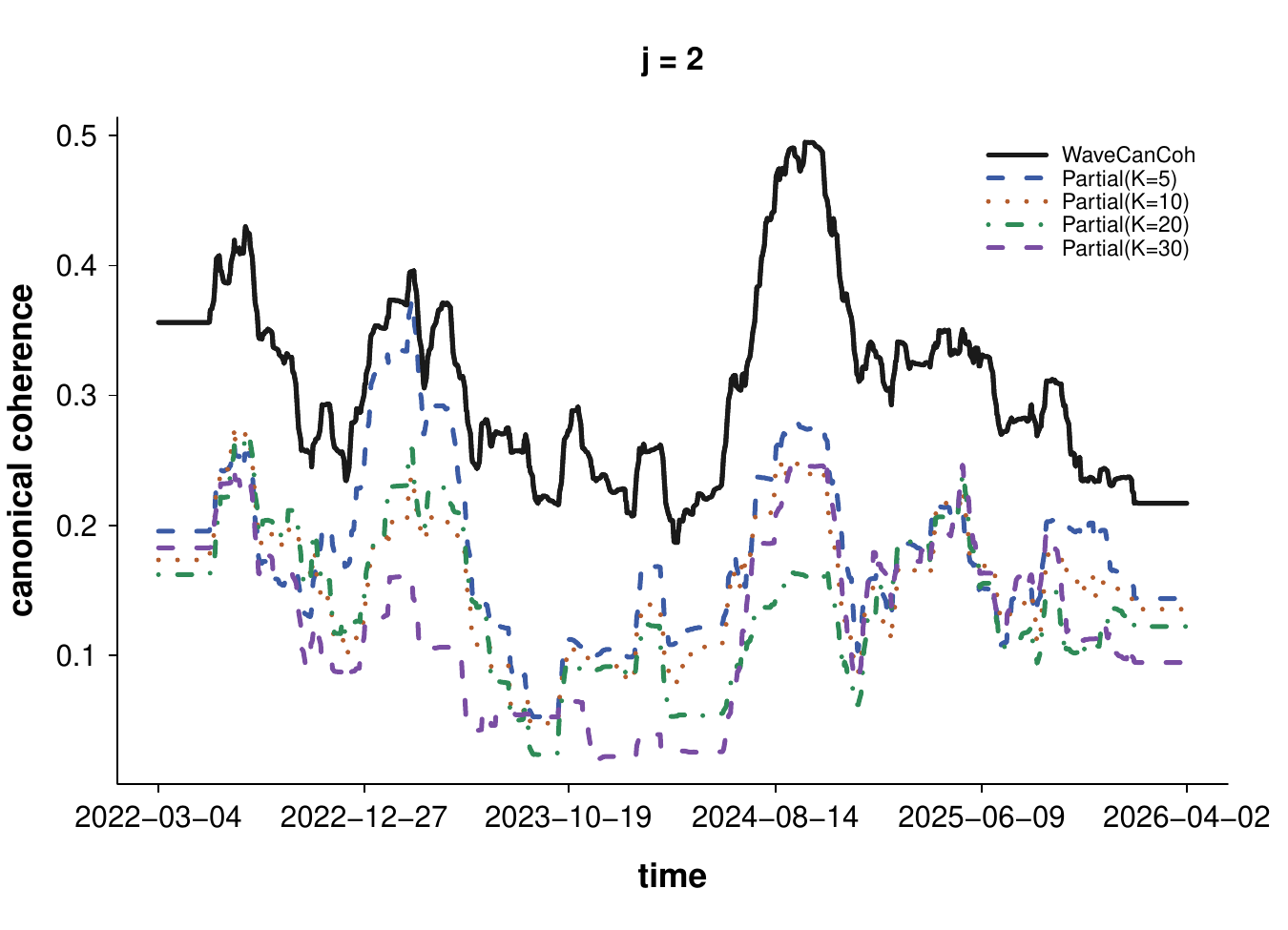}
    \end{minipage}
    
    \vspace{0.5em}
    
    \begin{minipage}[b]{0.48\textwidth}
        \centering
        \includegraphics[width=\linewidth]{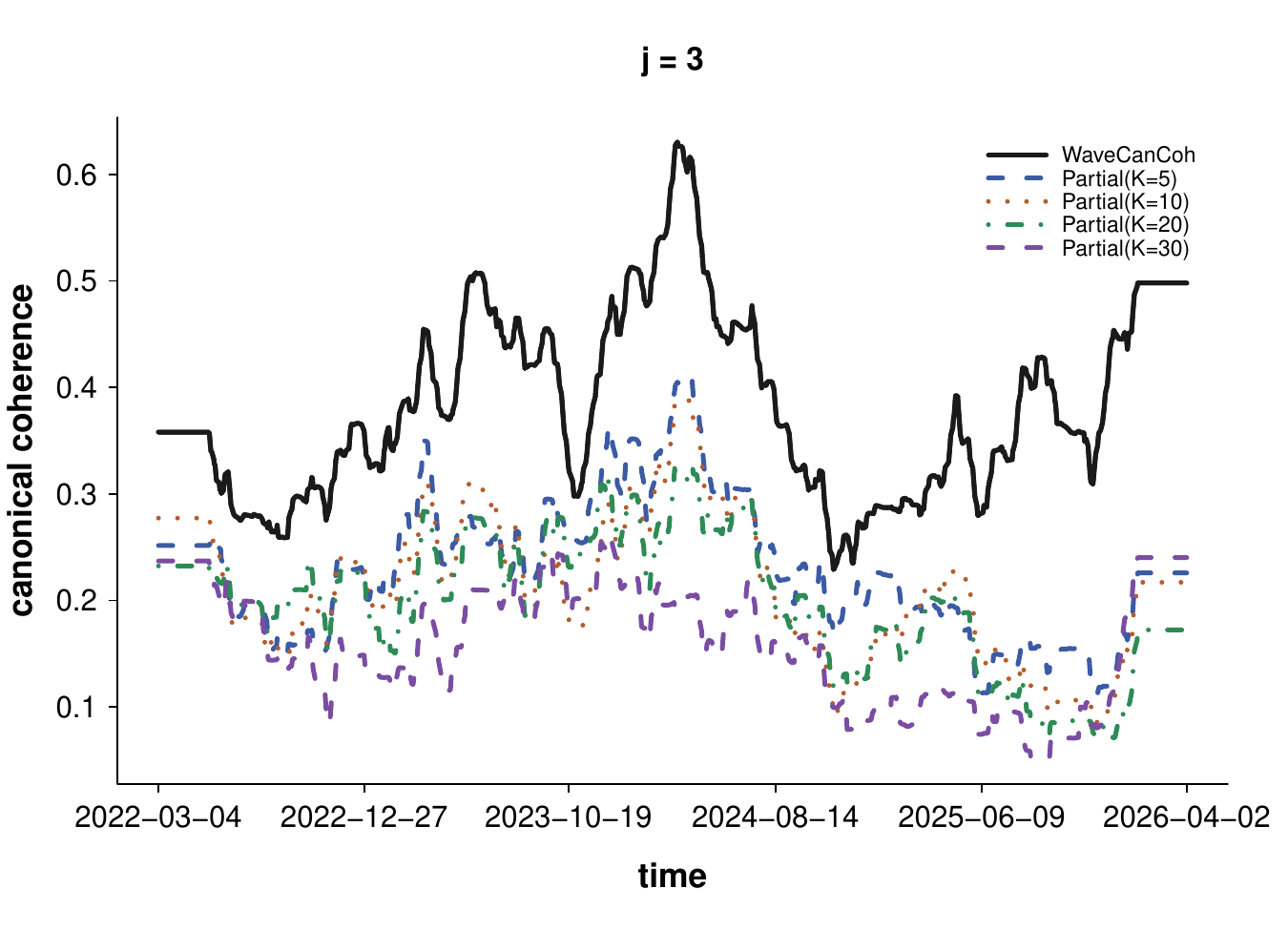}
    \end{minipage}
    \hfill
    \begin{minipage}[b]{0.48\textwidth}
        \centering
        \includegraphics[width=\linewidth]{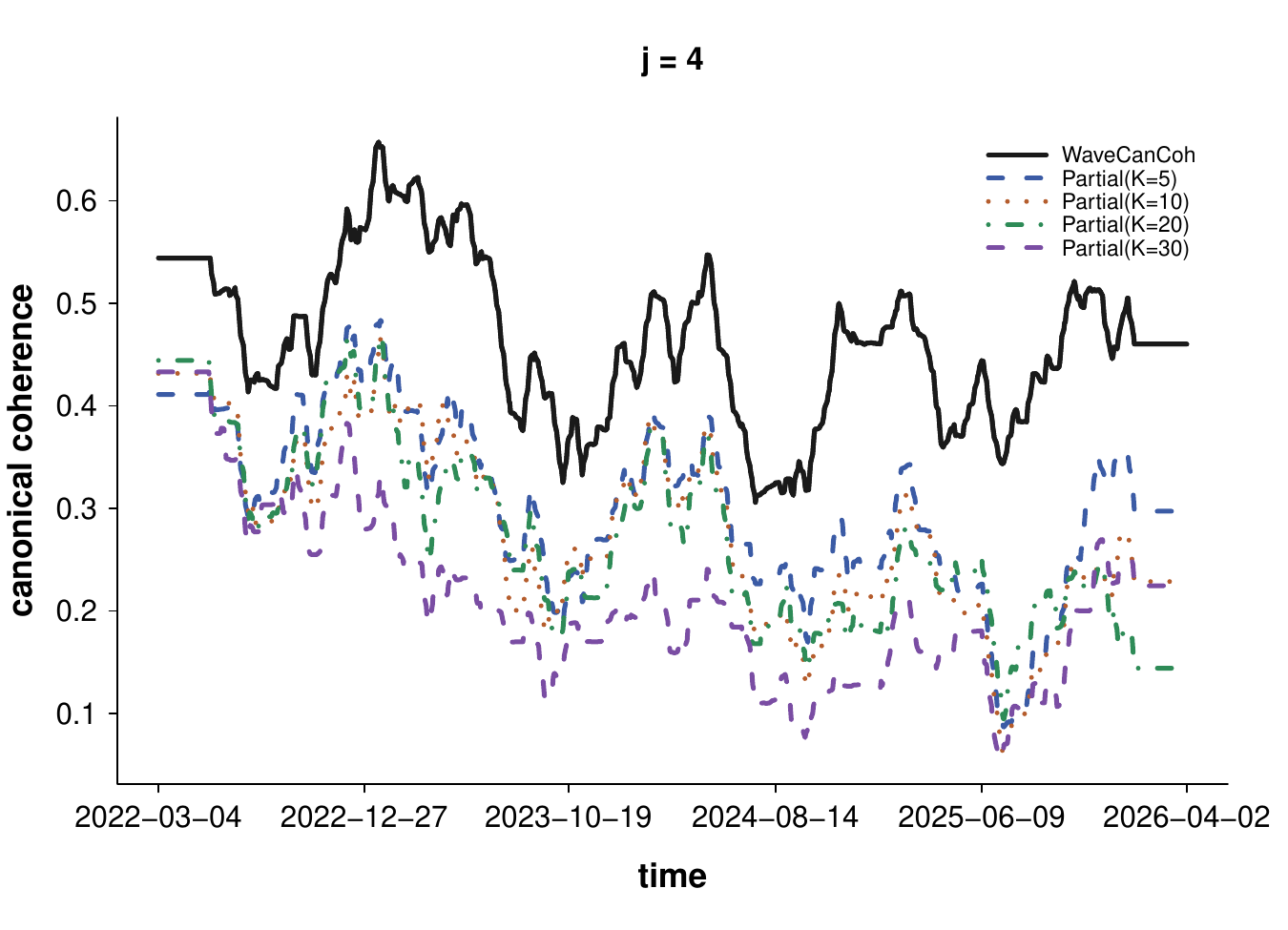}
    \end{minipage}
    
    \caption{Estimates of wavelet-based canonical coherence (WaveCanCoh) and partial canonical coherence (Partial WaveCanCoh) across scales $j=1,2,3,4$ for daily returns of U.S. equities. Each panel shows the estimated wavelet canonical coherence at a given scale for the unadjusted analysis and for partial adjustment with confounder dimensions $K=5,10,20,$ and $30$. For $K>5$, the confounding variables are reduced to five principal components before estimating Partial WaveCanCoh.}
    \label{fig:real_data}
\end{figure}

Figure \ref{fig:real_data} shows clear time-varying and scale-dependent structure in the estimated direct association between $\{\mathbf{X}_t\}$ and $\{\mathbf{Y}_t\}$ across four scales. At each scale, the canonical coherence varies substantially over time, while the coherence patterns differ noticeably across scales, indicating that the direct relationship between the two groups is concentrated in different frequency components at different times. Under the dyadic wavelet decomposition, the finer scales correspond roughly to shorter-horizon dynamics; for example, $j=1$ is associated with approximately bi-daily variation, whereas $j=2$ reflects roughly weekly movements while scales $j=3, \, 4$ correspond to monthly and seasonal dynamics, respectively. Interestingly, periods displaying weak dependence (e.g., time interval between 2023 and 2024) at the finest scale ($j=1$) appear actively dependent at coarser scales (e.g., $j=3$). This highlights a key advantage of our proposed wavelet-based framework: it does not merely provide a single global summary of dependence, but identifies which signal components are directly associated and when those associations are most pronounced. Being able to couple the time--scale (horizon) dynamics suggests from a financial perspective that the direct interaction between the broad equity block $\{\mathrm{SPY},\mathrm{QQQ},\mathrm{IWM},\mathrm{XLK}\}$ and the sector block $\{\mathrm{XLF},\mathrm{XLI},\mathrm{XLY},\mathrm{XLE}\}$ is not homogeneous across investment horizons: some episodes are driven primarily by shorter-horizon transmission of market sentiment and sector rotation, whereas others appear at lower-frequency scales and are more consistent with broader macroeconomic or business-cycle comovement.

Across all four scales, the unadjusted WaveCanCoh estimates are generally larger than the corresponding Partial WaveCanCoh ones, indicating that a substantial part of the apparent association between $\{\mathbf{X}_t\}$ and $\{\mathbf{Y}_t\}$ is in fact attributable to common external market effects captured by $\{\mathbf{Z}_t\}$. Given that $\{\mathbf{Z}_t\}$ consists of international equity ETFs, this reduction is consistent with the presence of global market factors that jointly affect both the core U.S. equity indexes and the sector-specific U.S. equity block. As the dimension of the confounding block increases from $K=5$ to $K=30$, the partial canonical coherence curves overall tend to shift downward, suggesting that richer confounding specifications remove additional shared variation and yield a more conservative estimate of the underlying direct association. This overall pattern provides empirical evidence that confounding effects are present and that partial adjustment is necessary for a more reliable characterization of the direct cross-group dependence. The estimates are not pointwise monotone in $K$, which is not unexpected, as partial canonical coherence is not generally monotone in the conditioning set. Moreover, for $K>5$, the confounding block is first reduced to a five-dimensional PCA-based representation, so different values of $K$ do not form a strictly nested sequence of conditioning variables. Hence, enlarging the raw confounding set need not induce a pointwise decrease in the estimated partial coherence. The salient empirical features are the overall downward shift relative to the unadjusted WaveCanCoh, together with the retention of clear time-varying and scale-dependent structure after adjustment.

\section{Conclusions} \label{sec:conc}
In this paper, we developed Partial Wavelet Canonical Coherence, a framework for measuring the direct association between two sets of multivariate nonstationary time series in the presence of confounding processes. The proposed methodology extends partial canonical correlation to the wavelet domain and, to our knowledge, provides the first scale-specific and time-varying measure of partial set-to-set dependence for multivariate time series. By operating within the multivariate locally stationary wavelet framework, it accommodates nonstationarity naturally and allows direct association to be localized both across scales and over time. The simulation and empirical results show that the method can distinguish direct dependence from confounded marginal association and can remain effective even when the confounding block is high-dimensional through suitable dimension reduction.

Several directions merit further investigation. On the applied side, the proposed framework may be especially useful in biomedical and neuroscientific settings \citep{embleton2022multiscale}, where one seeks direct interactions between multichannel systems after adjusting for common background activity, and in climate or environmental studies, where large-scale external drivers may mask localized cross-system dependence. On the methodological side, an important next step is to integrate Partial WaveCanCoh with richer dimension-reduction strategies for dependent and nonstationary data, including dynamic factor models, dynamic principal components, and wavelet-domain reduction methods. It would also be of interest to develop corresponding inferential theory and regularized variants suited to increasingly high-dimensional settings, thereby further broadening the scope of the proposed framework in modern multivariate time series analysis. 

\section{Conflicts of interest}
The authors declare that they have no competing interests.

\section{Funding}
MIK gratefully acknowledges support from  EPSRC NeST Programme Grant EP/X002195/1.

\section{Data availability}
The data underlying this article are available in {\tt quantmod} package in R.

%UNCOMMENT THE BELOW TWO LINES IN CASE YOU NEED NUMBERED FORMAT.
%\bibliographystyle{oup-plain}
%\bibliography{reference}

%UNCOMMENT THE BELOW TWO LINES IN CASE YOU NEED AUTHOR YEAR FORMAT.
\bibliographystyle{oup-abbrvnat}
%\bibliography{reference}

\bibliography{references}

% \clearpage

% \begin{center}
%   {\Large\bfseries Appendix}
% \end{center}
% \vspace{1em}

% \setcounter{section}{0}
% \setcounter{subsection}{0}
% \setcounter{equation}{0}
% \setcounter{figure}{0}
% \setcounter{table}{0}

% \renewcommand{\thesection}{\Alph{section}}
% \renewcommand{\thesubsection}{\Alph{section}.\arabic{subsection}}
% \renewcommand{\theequation}{\Alph{section}.\arabic{equation}}
% \renewcommand{\thefigure}{\Alph{section}.\arabic{figure}}
% \renewcommand{\thetable}{\Alph{section}.\arabic{table}}
\begin{appendices}

 \section{Theoretical Results}
\label{app:proof}
\textbf{Proof for Proposition 1} 
\begin{proof}
The proof follows a similar variational argument to that used for WaveCanCoh in \cite{wu2025wavelet}, with the ordinary local wavelet spectral matrices replaced by their partial counterparts, as follows. %and we include it here for completeness.% in the partial setting.

Suppose that $\{\mathbf{X}_t\}$, $\{\mathbf{Y}_t\}$, and $\{\mathbf{Z}_t\}$ are MvLSW time series, and for each scale $j$ and rescaled time $u$, the goal is to find the vectors $\mathbf{a}_j(u)$ and $\mathbf{b}_j(u)$ that maximize the partial canonical coherence between $\{\mathbf{X}_t\}$ and $\{\mathbf{Y}_t\}$ given the information in the process $\{\mathbf{Z}_t\}$,
\begin{align*}
    \boldsymbol{\rho}_{j;\mathbf{XY\cdot Z}}(u)
    =
    \left\{
    \mathbf{a}_j^{\top}(u)\mathbf{S}_{j;\mathbf{XY\cdot Z}}(u)\mathbf{b}_j(u)
    \right\}^2
\end{align*}
subject to the normalization constraints
\begin{align*}
    \mathbf{a}_j^{\top}(u)\mathbf{S}_{j;\mathbf{XX\cdot Z}}(u)\mathbf{a}_j(u)=1,
    \quad
    \mathbf{b}_j^{\top}(u)\mathbf{S}_{j;\mathbf{YY\cdot Z}}(u)\mathbf{b}_j(u)=1.
\end{align*}
Assuming that the scale $j$ and rescaled time $u$ are fixed, we suppress them for notational simplicity. To solve the above optimization problem, we consider the Lagrangian
\begin{align*}
\mathcal{L}(\mathbf{a},\mathbf{b},\lambda_1,\lambda_2)
&=
\frac{1}{2}\left(\mathbf{a}^{\top}\mathbf{S}_{\mathbf{XY\cdot Z}}\mathbf{b}\right)^2
-\frac{\lambda_1}{2}\left(\mathbf{a}^{\top}\mathbf{S}_{\mathbf{XX\cdot Z}}\mathbf{a}-1\right)
-\frac{\lambda_2}{2}\left(\mathbf{b}^{\top}\mathbf{S}_{\mathbf{YY\cdot Z}}\mathbf{b}-1\right).
\end{align*}
Differentiating the Lagrangian with respect to $\mathbf{a}$ and $\mathbf{b}$, respectively, and setting the derivatives equal to zero gives
\begin{align*}
\frac{\partial \mathcal{L}}{\partial \mathbf{a}}
&=
\left(\mathbf{S}_{\mathbf{XY\cdot Z}}\mathbf{b}\right)
\left(\mathbf{S}_{\mathbf{XY\cdot Z}}\mathbf{b}\right)^{\top}\mathbf{a}
-\lambda_1\mathbf{S}_{\mathbf{XX\cdot Z}}\mathbf{a}
=\mathbf{0} \\
&\Rightarrow
\left(\mathbf{S}_{\mathbf{XY\cdot Z}}\mathbf{b}\right)
\left(\mathbf{S}_{\mathbf{XY\cdot Z}}\mathbf{b}\right)^{\top}\mathbf{a}
=
\lambda_1\mathbf{S}_{\mathbf{XX\cdot Z}}\mathbf{a},
\qquad \mathrm{(i)}
\\[0.3em]
\frac{\partial \mathcal{L}}{\partial \mathbf{b}}
&=
\left(\mathbf{S}_{\mathbf{YX\cdot Z}}\mathbf{a}\right)
\left(\mathbf{S}_{\mathbf{YX\cdot Z}}\mathbf{a}\right)^{\top}\mathbf{b}
-\lambda_2\mathbf{S}_{\mathbf{YY\cdot Z}}\mathbf{b}
=\mathbf{0} \\
&\Rightarrow
\left(\mathbf{S}_{\mathbf{YX\cdot Z}}\mathbf{a}\right)
\left(\mathbf{S}_{\mathbf{YX\cdot Z}}\mathbf{a}\right)^{\top}\mathbf{b}
=
\lambda_2\mathbf{S}_{\mathbf{YY\cdot Z}}\mathbf{b},
\qquad \mathrm{(ii)}
\end{align*}
where $\mathbf{S}_{\mathbf{YX\cdot Z}}=\mathbf{S}_{\mathbf{XY\cdot Z}}^{\top}$.

Using $\mathrm{(i)}$ and $\mathrm{(ii)}$, and noting that $\mathbf{a}^{\top}\mathbf{S}_{\mathbf{XY\cdot Z}}\mathbf{b}$ is a real-valued scalar, we further obtain
\begin{align*}
(\mathbf{a}^{\top}\mathbf{S}_{\mathbf{XY\cdot Z}}\mathbf{b})
(\mathbf{a}^{\top}\mathbf{S}_{\mathbf{XY\cdot Z}}\mathbf{b})
&=
\lambda_1\mathbf{a}^{\top}\mathbf{S}_{\mathbf{XX\cdot Z}}\mathbf{a}, \\
(\mathbf{b}^{\top}\mathbf{S}_{\mathbf{YX\cdot Z}}\mathbf{a})
(\mathbf{a}^{\top}\mathbf{S}_{\mathbf{XY\cdot Z}}\mathbf{b})
&=
\lambda_2\mathbf{b}^{\top}\mathbf{S}_{\mathbf{YY\cdot Z}}\mathbf{b}.
\end{align*}
Recalling the constraints
\begin{align*}
\mathbf{a}^{\top}\mathbf{S}_{\mathbf{XX\cdot Z}}\mathbf{a}=1,
\quad
\mathbf{b}^{\top}\mathbf{S}_{\mathbf{YY\cdot Z}}\mathbf{b}=1,
\end{align*}
it follows immediately that
\begin{align*}
\lambda_1=\lambda_2=
\left(\mathbf{a}^{\top}\mathbf{S}_{\mathbf{XY\cdot Z}}\mathbf{b}\right)^2
=: \lambda.
\end{align*}

Substituting this back into $\mathrm{(i)}$ and $\mathrm{(ii)}$, and assuming $\lambda\neq 0$, yields
\begin{align*}
\mathbf{a}
=
\frac{1}{\sqrt{\lambda}}
\mathbf{S}_{\mathbf{XX\cdot Z}}^{-1}\mathbf{S}_{\mathbf{XY\cdot Z}}\mathbf{b}, \quad
\mathbf{b}
=
\frac{1}{\sqrt{\lambda}}
\mathbf{S}_{\mathbf{YY\cdot Z}}^{-1}\mathbf{S}_{\mathbf{YX\cdot Z}}\mathbf{a}.
\end{align*}
Using these expressions into $\mathrm{(i)}$ and $\mathrm{(ii)}$, respectively, gives
\begin{align*}
\mathbf{S}_{\mathbf{XY\cdot Z}}\mathbf{S}_{\mathbf{YY\cdot Z}}^{-1}\mathbf{S}_{\mathbf{YX\cdot Z}}\mathbf{a}
=
\lambda\,\mathbf{S}_{\mathbf{XX\cdot Z}}\mathbf{a}, \quad
\mathbf{S}_{\mathbf{YX\cdot Z}}\mathbf{S}_{\mathbf{XX\cdot Z}}^{-1}\mathbf{S}_{\mathbf{XY\cdot Z}}\mathbf{b}
=
\lambda\,\mathbf{S}_{\mathbf{YY\cdot Z}}\mathbf{b},
\end{align*}
or, equivalently,
\begin{align*}
\mathbf{S}_{\mathbf{XX\cdot Z}}^{-1}\mathbf{S}_{\mathbf{XY\cdot Z}}\mathbf{S}_{\mathbf{YY\cdot Z}}^{-1}\mathbf{S}_{\mathbf{YX\cdot Z}}\mathbf{a}
=
\lambda\,\mathbf{a}, \quad
\mathbf{S}_{\mathbf{YY\cdot Z}}^{-1}\mathbf{S}_{\mathbf{YX\cdot Z}}\mathbf{S}_{\mathbf{XX\cdot Z}}^{-1}\mathbf{S}_{\mathbf{XY\cdot Z}}\mathbf{b}
=
\lambda\,\mathbf{b}.
\end{align*}
Hence,
$\lambda=\left(\mathbf{a}^{\top}\mathbf{S}_{\mathbf{XY\cdot Z}}\mathbf{b}\right)^2
$ is an eigenvalue of both matrices
\begin{align*}
\mathbf{N}_{\mathbf{a}\mid \mathbf{Z}}:=\mathbf{S}_{\mathbf{XX\cdot Z}}^{-1}\mathbf{S}_{\mathbf{XY\cdot Z}}\mathbf{S}_{\mathbf{YY\cdot Z}}^{-1}\mathbf{S}_{\mathbf{YX\cdot Z}} \text{ and } 
\mathbf{N}_{\mathbf{b}\mid \mathbf{Z}}:=\mathbf{S}_{\mathbf{YY\cdot Z}}^{-1}\mathbf{S}_{\mathbf{YX\cdot Z}}\mathbf{S}_{\mathbf{XX\cdot Z}}^{-1}\mathbf{S}_{\mathbf{XY\cdot Z}},
\end{align*}
with corresponding eigenvectors $\mathbf{a}$ and $\mathbf{b}$, respectively. Therefore, the partial wavelet canonical coherence as defined in Definition~\ref{def:partialwcc},
\begin{align*}
\boldsymbol{\rho}_{j;\mathbf{XY\cdot Z}}(u)
=
\max_{\mathbf{a}_j(u), \mathbf{b}_j(u)}
\left\{
\mathbf{a}_j^{\top}(u)\mathbf{S}_{j;\mathbf{XY\cdot Z}}(u)\mathbf{b}_j(u)
\right\}^2
\end{align*}
is in fact given by the largest eigenvalue of the above matrices $\mathbf{N}_{j;\mathbf{a}\mid \mathbf{Z}}(u)$ and $\mathbf{N}_{j;\mathbf{b}\mid \mathbf{Z}}(u)$. The case $\lambda=0$ corresponds to zero partial canonical coherence between the $\mathbf{X}$- and $\mathbf{Y}$- processes after adjusting for $\mathbf{Z}$, which is not of primary interest here. Finally, we note that the above derivation holds for every scale $j$ and rescaled time $u$.
\end{proof}

\textbf{Proof for Proposition 2}
\begin{proof}
For ease of notation, in what follows we drop the spectral dependence on rescaled time $u$. We aim to establish the consistency of the matrix estimators 
\begin{align*}
\widehat{\mathbf{N}}_{j;\mathbf{a}\mid \mathbf{Z}}
&=
\widehat{\mathbf{S}}_{j;\mathbf{XX\cdot Z}}^{-1}
\widehat{\mathbf{S}}_{j;\mathbf{XY\cdot Z}}
\widehat{\mathbf{S}}_{j;\mathbf{YY\cdot Z}}^{-1}
\widehat{\mathbf{S}}_{j;\mathbf{YX\cdot Z}}, \\
\widehat{\mathbf{N}}_{j;\mathbf{b}\mid \mathbf{Z}}
&=
\widehat{\mathbf{S}}_{j;\mathbf{YY\cdot Z}}^{-1}
\widehat{\mathbf{S}}_{j;\mathbf{YX\cdot Z}}
\widehat{\mathbf{S}}_{j;\mathbf{XX\cdot Z}}^{-1}
\widehat{\mathbf{S}}_{j;\mathbf{XY\cdot Z}}.
\end{align*}
Under the uniform autocovariance absolute summability assumption, \cite{ombao2014} have shown that the corrected smoothed periodogram estimator of the meta-process LWS in~\eqref{eq:joint_lws_est} is consistent for the true spectrum. Therefore, as $T\to\infty$ and the smoothing parameter $M\to\infty$ with $M/T\to 0$, we have for the component relevant block local wavelet spectral matrices
\begin{align*}
\widehat{\mathbf{S}}_{j;\mathbf{AB}} \xrightarrow{P} \mathbf{S}_{j;\mathbf{AB}},
\end{align*}
for all pairs $\mathbf{AB}\in\{\mathbf{XX},\mathbf{XY},\mathbf{YX},\mathbf{YY},\mathbf{XZ},\mathbf{ZX},\mathbf{YZ},\mathbf{ZY},\mathbf{ZZ}\}$.
Since the partial LWS matrices are obtained from these block matrices through matrix multiplication, subtraction, and inversion (see equations~\eqref{eq:partial_lws_xx},~\eqref{eq:partial_lws_yy} and~\eqref{eq:partial_lws_xy}), it follows by the continuous mapping theorem and Slutsky's theorem \citep{billingsley1999convergence} that
\begin{align*}
\widehat{\mathbf{S}}_{j;\mathbf{XX\cdot Z}} &\xrightarrow{P} \mathbf{S}_{j;\mathbf{XX\cdot Z}}, \qquad
\widehat{\mathbf{S}}_{j;\mathbf{YY\cdot Z}} \xrightarrow{P} \mathbf{S}_{j;\mathbf{YY\cdot Z}}, \\
\widehat{\mathbf{S}}_{j;\mathbf{XY\cdot Z}} &\xrightarrow{P} \mathbf{S}_{j;\mathbf{XY\cdot Z}}, \qquad
\widehat{\mathbf{S}}_{j;\mathbf{YX\cdot Z}} \xrightarrow{P} \mathbf{S}_{j;\mathbf{YX\cdot Z}},
\end{align*}
assuming that the required spectral matrices are nonsingular.

Applying the continuous mapping theorem once more gives
\begin{align*}
\widehat{\mathbf{S}}_{j;\mathbf{XX\cdot Z}}^{-1}
\xrightarrow{P}
\mathbf{S}_{j;\mathbf{XX\cdot Z}}^{-1},
\qquad
\widehat{\mathbf{S}}_{j;\mathbf{YY\cdot Z}}^{-1}
\xrightarrow{P}
\mathbf{S}_{j;\mathbf{YY\cdot Z}}^{-1}.
\end{align*}
Hence, by the estimator construction in equations~\eqref{eq:partial_M_hat_a} and~\eqref{eq:partial_M_hat_b} coupled with Slutsky's theorem,
\begin{align*}
\widehat{\mathbf{N}}_{j;\mathbf{a}\mid \mathbf{Z}}
\xrightarrow{P}
\mathbf{N}_{j;\mathbf{a}\mid \mathbf{Z}},
\qquad
\widehat{\mathbf{N}}_{j;\mathbf{b}\mid \mathbf{Z}}
\xrightarrow{P}
\mathbf{N}_{j;\mathbf{b}\mid \mathbf{Z}}.
\end{align*}
Consequently, the estimated partial wavelet canonical coherence, together with the associated canonical vectors obtained from the largest eigenvalue and corresponding eigenvectors of $\widehat{\mathbf{N}}_{j;\mathbf{a}\mid \mathbf{Z}}$ and $\widehat{\mathbf{N}}_{j;\mathbf{b}\mid \mathbf{Z}}$, converge in probability to their population counterparts, by arguments analogous to those in \cite{knight2024adaptive}.
\end{proof}

\section{Supplementary simulation details and results}
\label{app:simulation}
We simulate a multivariate locally stationary wavelet (MvLSW) process
\[
\mathbf{W}_t=
\bigl(\mathbf{X}_t^\top,\mathbf{Y}_t^\top,\mathbf{Z}_t^\top\bigr)^\top,
\qquad t=1,\dots,T,
\]
with $T=1024$, where $\mathbf{X}_t\in\mathbb{R}^4$, $\mathbf{Y}_t\in\mathbb{R}^3$, and $\mathbf{Z}_t\in\mathbb{R}^K$, stemming from
\[
\mathbf{S}_{j;\mathbf{W}\mathbf{W}}(u), \qquad u=t/T\in(0,1),
\]
which denotes the local wavelet spectral matrix of the process $\{\mathbf{W}_t\}$. In all simulations, the spectrum is non-zero at one scale and we consider two such designs, with active scale $j=1$ and $j=4$, respectively. In both cases, the spectral matrix is partitioned as
\[
\mathbf{S}_{j;\mathbf{W}\mathbf{W}}(u)=
\begin{bmatrix}
\mathbf{S}_{j;\mathbf{XX}}(u) & \mathbf{S}_{j;\mathbf{XY}}(u) & \mathbf{S}_{j;\mathbf{XZ}}(u)\\
\mathbf{S}_{j;\mathbf{YX}}(u) & \mathbf{S}_{j;\mathbf{YY}}(u) & \mathbf{S}_{j;\mathbf{YZ}}(u)\\
\mathbf{S}_{j;\mathbf{ZX}}(u) & \mathbf{S}_{j;\mathbf{ZY}}(u) & \mathbf{S}_{j;\mathbf{ZZ}}(u)
\end{bmatrix}.
\]
The direct association between $\{\mathbf{X}_t\}$ and $\{\mathbf{Y}_t\}$ is generated through
\[
\mathbf{S}_{j;\mathbf{XY}}(u)
=
\mathbf{S}_{j;\mathbf{XZ}}(u)\,
\mathbf{S}_{j;\mathbf{ZZ}}(u)^{-1}\,
\mathbf{S}_{j;\mathbf{ZY}}(u)
+
\boldsymbol{\Delta}_{XY}^{(j)}(u) ,
\]
so that the observed dependence between $\{\mathbf{X}_t\}$ and $\{\mathbf{Y}_t\}$ contains both an induced component through the confounders and, when $\boldsymbol{\Delta}_{XY}^{(j)}(u)\neq \mathbf{0}$, a direct component. We consider three confounding dimensions, $K=5,10,$ and $20$.

\subsection*{Active scale $j=1$}

For the design with active scale $j=1$, the within-block spectra for $\{\mathbf{X}_t\}$ and $\{\mathbf{Y}_t\}$ are
\[
\mathbf{S}_{1;\mathbf{XX}}=
\begin{bmatrix}
8 & 1 & 1 & 0\\
1 & 8 & 0 & 1\\
1 & 0 & 8 & 0\\
0 & 1 & 0 & 8
\end{bmatrix},
\qquad
\mathbf{S}_{1;\mathbf{YY}}=
\begin{bmatrix}
5 & 1 & 1\\
1 & 5 & 0\\
1 & 0 & 5
\end{bmatrix}.
\]
In the direct-dependence setting, we take
\[
\boldsymbol{\Delta}_{XY}^{(1)}(u)=
\begin{bmatrix}
0 & 0 & 1+u\\
1+u & 0 & 0\\
0 & 0 & 0\\
0 & 0 & 0
\end{bmatrix},
\]
whereas in the no-direct setting we set $\boldsymbol{\Delta}_{XY}^{(1)}(u)\equiv \mathbf{0}$.

For $K=5$, the confounding block is
\[
\mathbf{S}_{1;\mathbf{ZZ}}^{(K=5)}=
\begin{bmatrix}
8 & 0 & 2.5 & 0 & 0\\
0 & 8 & 2   & 1.5 & 1\\
2.5 & 2 & 8 & 0 & 0\\
0 & 1.5 & 0 & 8 & 0\\
0 & 1 & 0 & 0 & 8
\end{bmatrix},
\]
with
\[
\mathbf{S}_{1;\mathbf{XZ}}^{(K=5)}(u)=
\begin{bmatrix}
0 & 0 & 2+u & 0 & 0\\
2+u & 0 & 0 & 0 & 0\\
0 & 0 & 0 & 0 & 0\\
0 & 0 & 0 & 0 & 0
\end{bmatrix},
\qquad
\mathbf{S}_{1;\mathbf{YZ}}^{(K=5)}(u)=
\begin{bmatrix}
0 & 0 & 1+u & 0 & 0\\
1+u & 0 & 0 & 0 & 0\\
0 & 0 & 0 & 0 & 0
\end{bmatrix}.
\]

For $K=10$, five additional confounding channels are appended. These channels have diagonal entries equal to $7$, are linked sparsely to the original five confounders, and a subset of them is further connected to selected components of $\mathbf{X}_t$ or $\mathbf{Y}_t$ through weaker time-varying cross-spectra. Thus, the $K=10$ setting preserves the same core confounding structure while increasing the observed dimension of $\mathbf{Z}_t$.

For $K=20$, ten further confounding channels are appended to the $K=10$ specification. These additional channels have diagonal entries equal to $7$ and are incorporated through sparse links: some are connected to previously defined confounders, some are linked to selected components of $\mathbf{X}_t$ or $\mathbf{Y}_t$, and some remain only weakly connected. This yields a higher-dimensional and more heterogeneous confounding block while leaving $\mathbf{S}_{1;\mathbf{XX}}$, $\mathbf{S}_{1;\mathbf{YY}}$, and $\boldsymbol{\Delta}_{XY}^{(1)}(u)$ unchanged. The corresponding results appear in Section \ref{sec:sim} of the main text.

\subsection*{Active scale $j=4$}

For the design with active scale $j=4$, the within-block spectra become
\[
\mathbf{S}_{4;\mathbf{XX}}=
\begin{bmatrix}
8 & 1 & 0 & 1\\
1 & 8 & 1 & 0\\
0 & 1 & 8 & 0\\
1 & 0 & 0 & 8
\end{bmatrix},
\qquad
\mathbf{S}_{4;\mathbf{YY}}=
\begin{bmatrix}
5 & 1 & 0\\
1 & 5 & 1\\
0 & 1 & 5
\end{bmatrix}.
\]
In the direct-dependence setting, we take
\[
\boldsymbol{\Delta}_{XY}^{(4)}(u)=
\begin{bmatrix}
0 & 2-u & 0\\
0 & 0 & 0\\
2-u & 0 & 0\\
0 & 0 & 0
\end{bmatrix},
\]
whereas in the no-direct setting we set $\boldsymbol{\Delta}_{XY}^{(4)}(u)\equiv \mathbf{0}$.

For $K=5$, the core confounding block is
\[
\mathbf{S}_{4;\mathbf{ZZ}}^{(K=5)}=
\begin{bmatrix}
8 & 2 & 0 & 1.5 & 0\\
2 & 8 & 0 & 0 & 1\\
0 & 0 & 8 & 2.2 & 0\\
1.5 & 0 & 2.2 & 8 & 0\\
0 & 1 & 0 & 0 & 8
\end{bmatrix},
\]
with
\[
\mathbf{S}_{4;\mathbf{XZ}}^{(K=5)}(u)=
\begin{bmatrix}
0 & 3-0.8u & 0 & 0 & 0\\
0 & 0 & 0 & 0 & 0\\
0 & 0 & 0 & 2.6-0.8u & 0\\
0 & 0 & 0 & 0 & 0
\end{bmatrix},
\qquad
\mathbf{S}_{4;\mathbf{YZ}}^{(K=5)}(u)=
\begin{bmatrix}
0 & 2-0.8u & 0 & 0 & 0\\
0 & 0 & 0 & 0 & 0\\
0 & 0 & 0 & 1.8-0.8u & 0
\end{bmatrix}.
\]

As in the $j=1$ design, the cases $K=10$ and $K=20$ are obtained by augmenting the confounding block with additional channels having diagonal entries equal to $7$, together with sparse internal links and weaker time-varying connections to selected components of $\mathbf{X}_t$ and $\mathbf{Y}_t$. The direct component $\boldsymbol{\Delta}_{XY}^{(4)}(u)$ is kept fixed across these higher-dimensional settings. For numerical stability, we add a ridge term $10^{-6}\mathbf{I}$ to $\mathbf{S}_{j;\mathbf{W}\mathbf{W}}(u)$ at each $(j,u)$.

Figures \ref{fig1:app_comparision} and \ref{fig2:app_high_dim} illustrate the estimates for the $j=4$ design. The same qualitative conclusions as in the main text continue to hold. In the no-direct setting, the proposed partial WaveCanCoh remains close to zero, whereas the unadjusted WaveCanCoh reflects the marginal association induced by the confounders. In the direct-dependence setting, the proposed method continues to recover the underlying time-varying direct association and remains stable as the confounder dimension increases.

\begin{figure}[htbp]
  \centering
  \begin{minipage}[b]{0.49\textwidth}
    \centering
    \includegraphics[width=\linewidth,height=6cm,keepaspectratio]{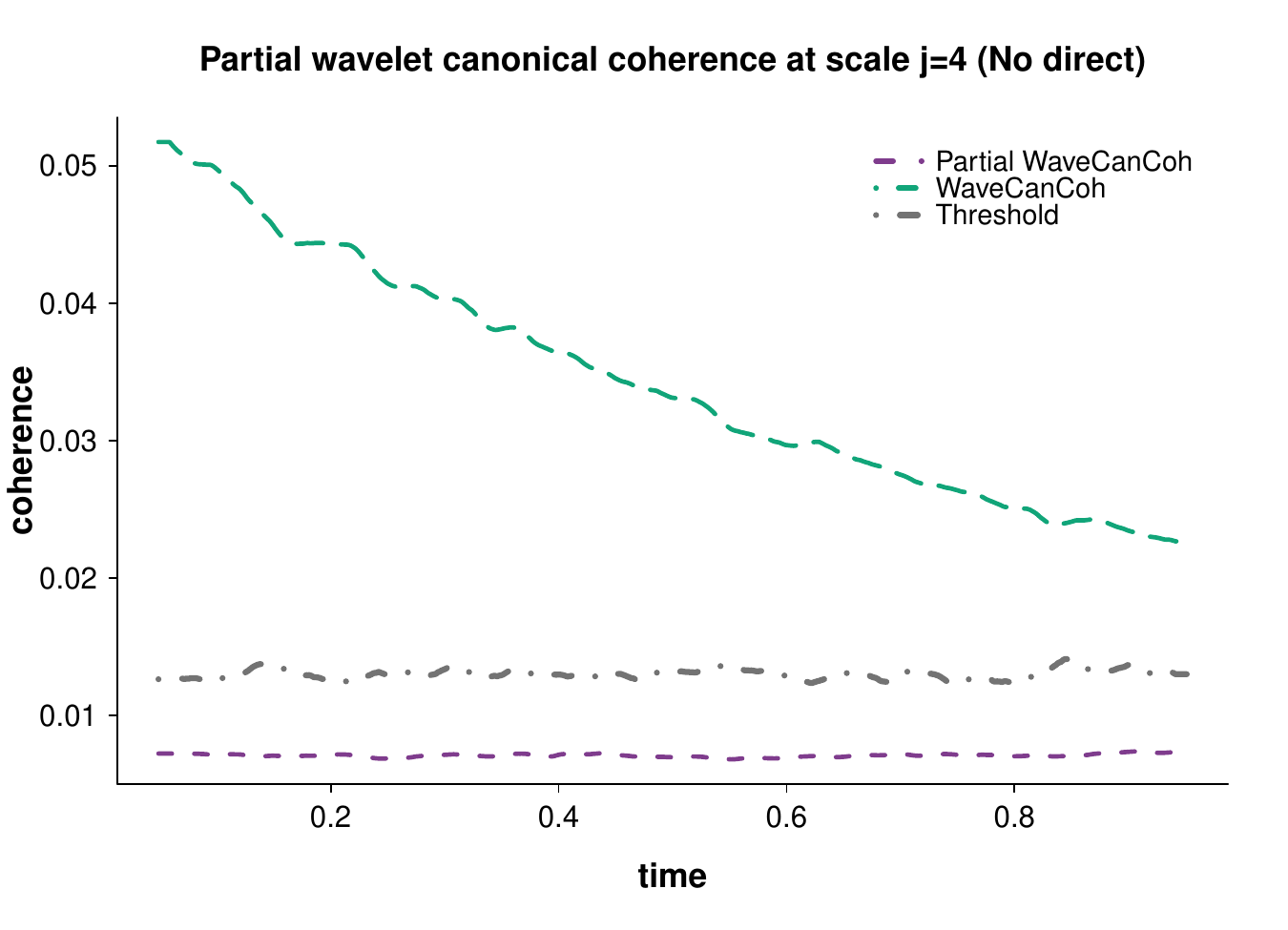}
  \end{minipage}
  \hfill
  \begin{minipage}[b]{0.49\textwidth}
    \centering
    \includegraphics[width=\linewidth,height=6cm,keepaspectratio]{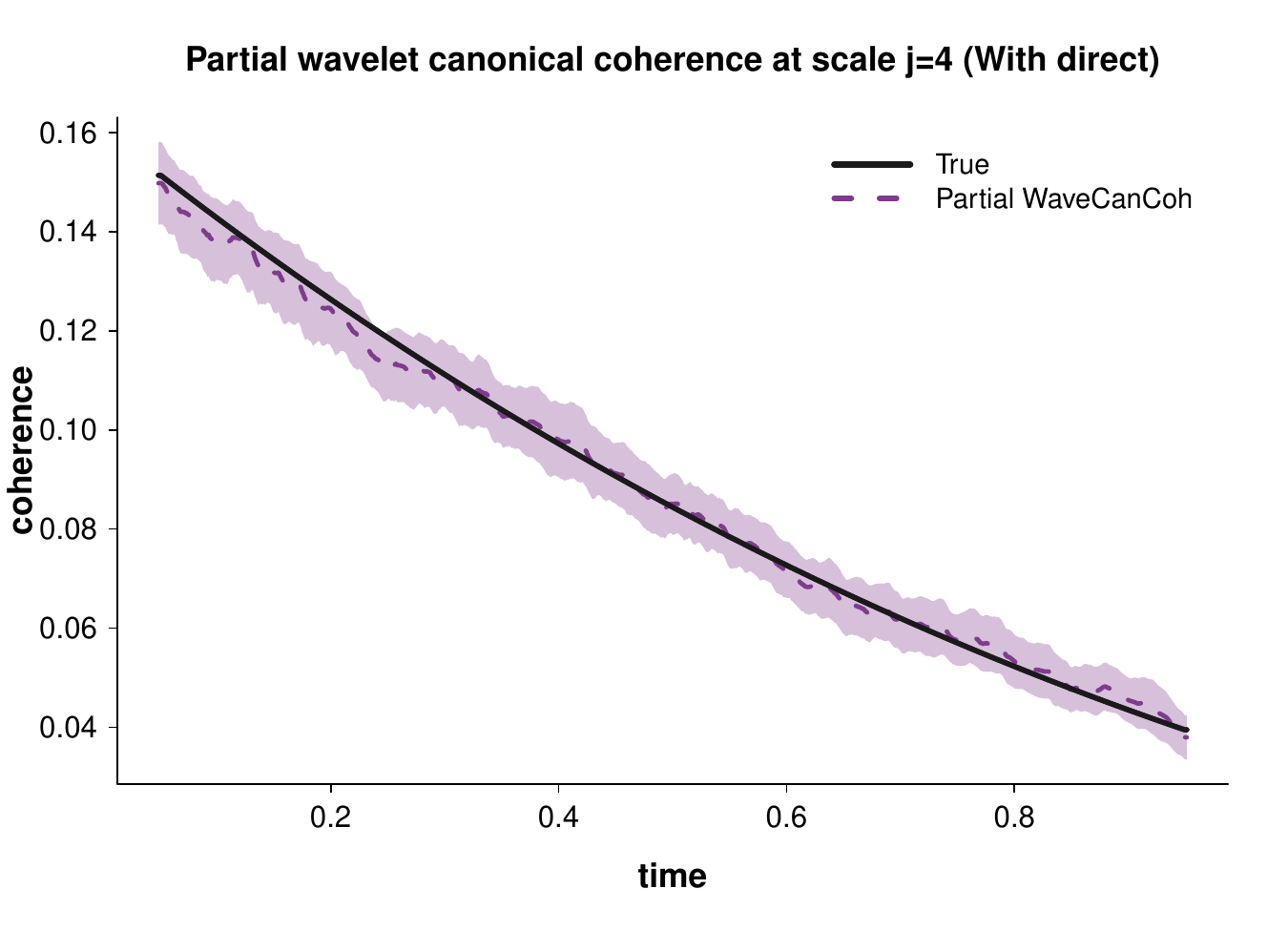}
  \end{minipage}
  \caption{Estimated partial wavelet canonical coherence (Partial WaveCanCoh) at scale $j=4$ under no direct dependence (left) and time-varying direct dependence (right). Estimated curves are Monte Carlo averages over 1000 independent replicates. The gray dashed curve in the left panel is the pointwise 95\% empirical null threshold. Shaded bands in the right panel denote Wald-type 95\% confidence intervals, with the true curve shown in black.}
  \label{fig1:app_comparision}
\end{figure}

\begin{figure}[htbp]
  \centering
  \includegraphics[width=0.6\linewidth]{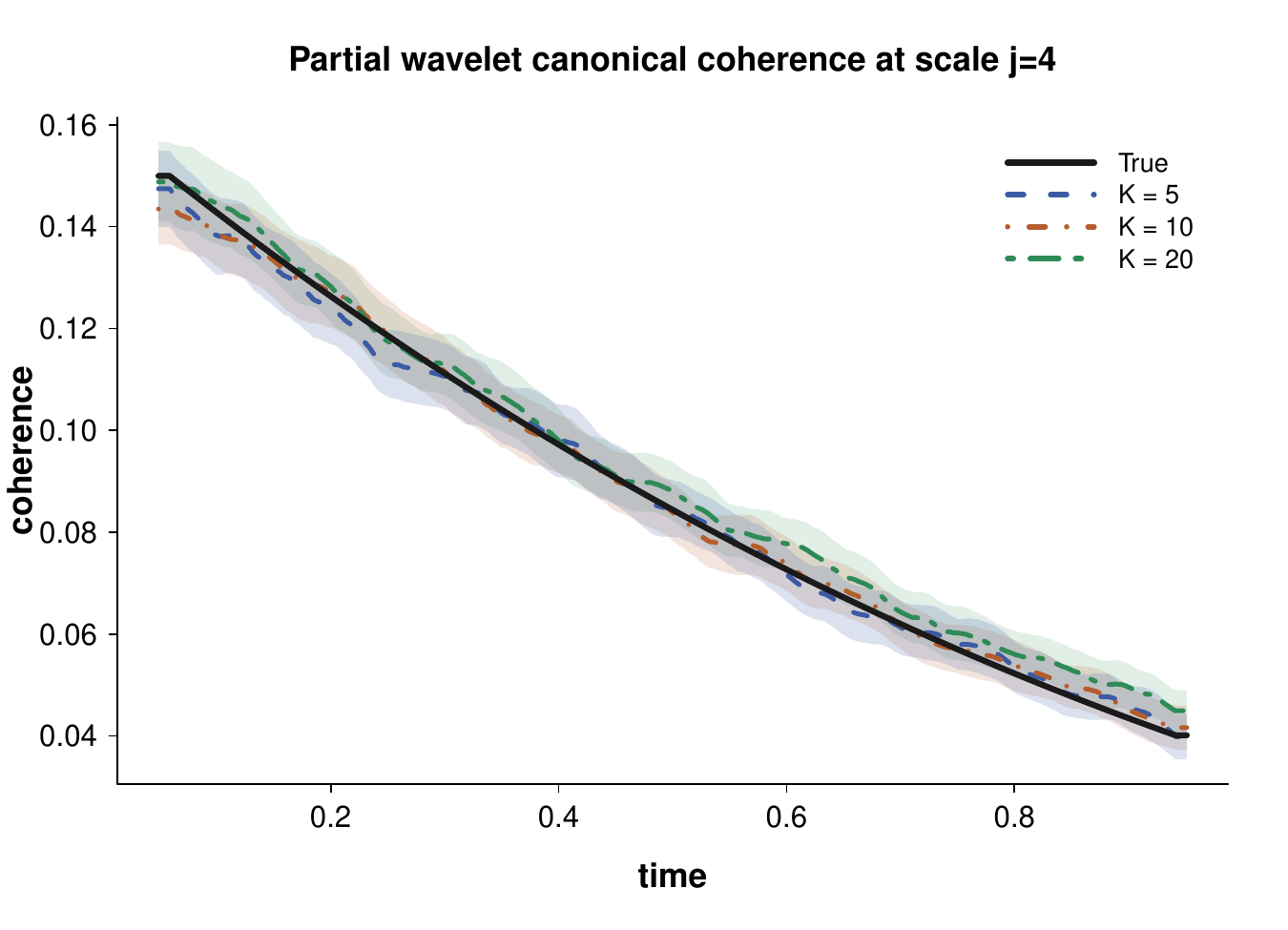}
  \caption{Partial wavelet canonical coherence at scale $j=4$ under high-dimensional confounding. The true curve is shown in black, and estimates are reported for $K=5,\, 10, \, 20$. Shaded bands denote Wald-type 95\% confidence intervals. The proposed method continues to recover the underlying time-varying direct association as the confounder dimension increases.}
  \label{fig2:app_high_dim}
\end{figure}

To further assess calibration under the no-direct-dependence setting, we report the pointwise empirical Type I error rates at scales $j=1$ and $j=4$ in Figure \ref{fig:type1_null}, illustrating the empirical rejection proportions remain close to the nominal 5\% level across time.
\begin{figure}[htbp]
  \centering
  
  \begin{minipage}[b]{0.49\textwidth}
    \centering
    \includegraphics[width=\linewidth,height=5.6cm,keepaspectratio]{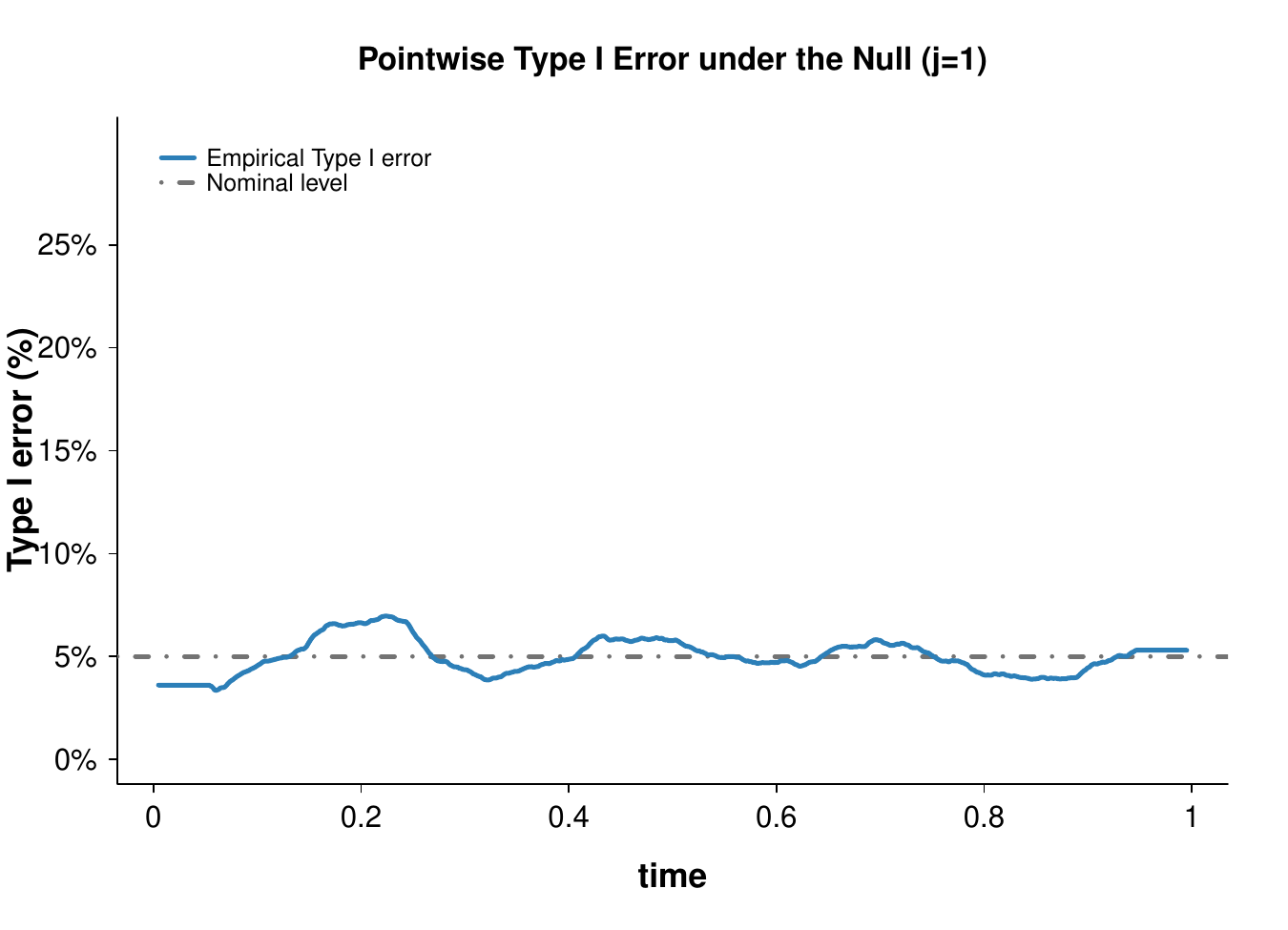}
    
    % \smallskip
    % {\small (a) $j=1$}
  \end{minipage}
  \hfill
  \begin{minipage}[b]{0.49\textwidth}
    \centering
    \includegraphics[width=\linewidth,height=5.6cm,keepaspectratio]{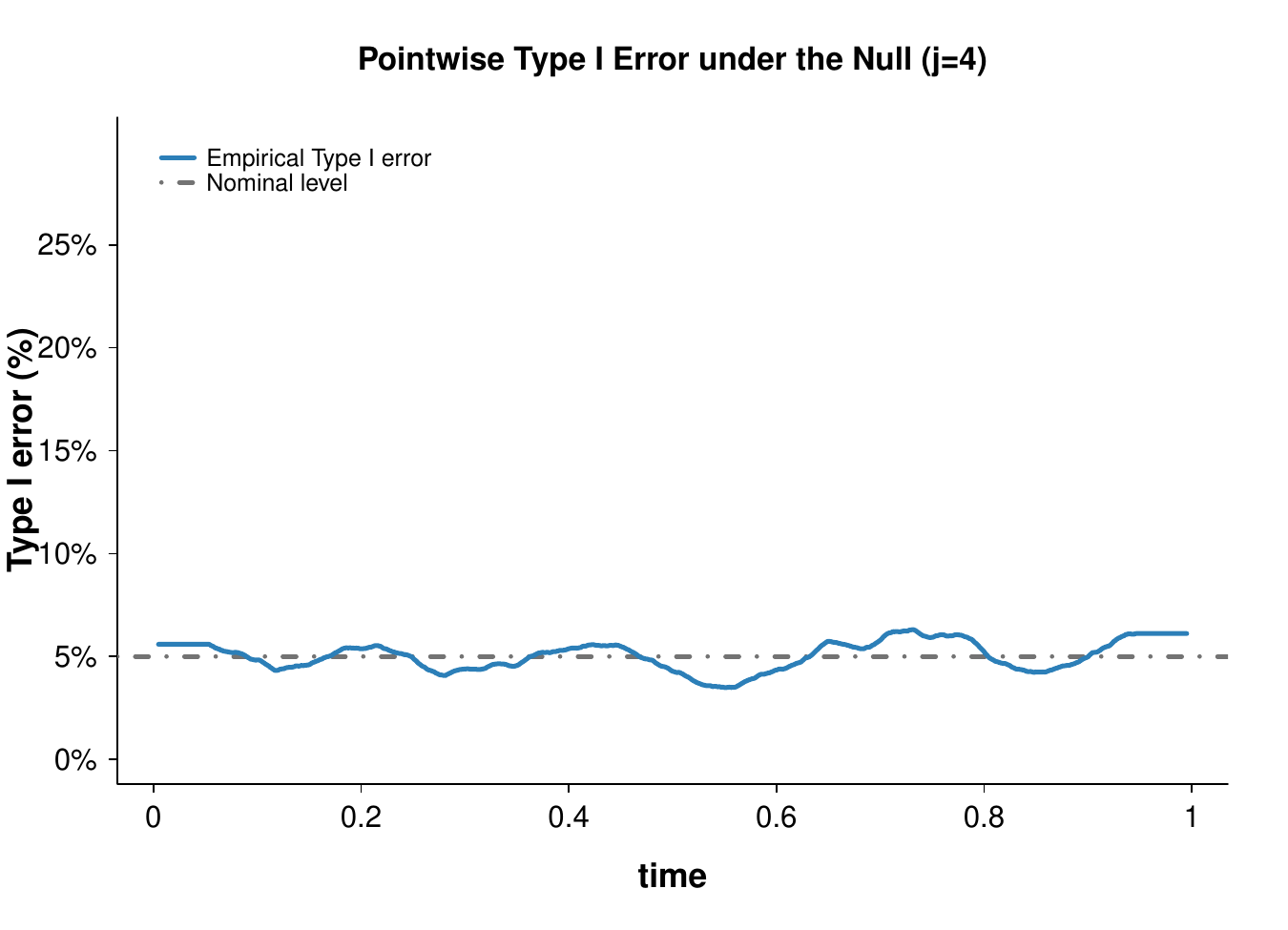}
    
    % \smallskip
    % {\small (b) $j=4$}
  \end{minipage}
  
  \caption{Pointwise empirical Type I error under the no direct dependence setting. The curves report the proportion of independent null evaluation replicates for which the estimated partial WaveCanCoh exceeds the pointwise 95\% empirical null threshold. The dashed horizontal line denotes the nominal 5\% level.}
  \label{fig:type1_null}
\end{figure}

\section{Additional information for the financial data analysis}\label{app:data_analysis}

This appendix reports the composition of the ETF groups used in the empirical analysis. 
The sets $\{\mathbf{X}_t\}$ and $\{\mathbf{Y}_t\}$ are kept fixed throughout, whereas the confounding block $\{\mathbf{Z}_t\}$ is constructed in a nested manner so that its dimension increases from 5 to 10, 20, and 30. 
This design preserves a coherent economic interpretation of $\{\mathbf{Z}_t\}$ as an external market block while allowing us to examine the proposed method under increasingly high-dimensional confounding settings.

The fixed groups are given by
\begin{align*}
\mathbf{X}_t
&= \bigl(r_{\mathrm{SPY},t},\, r_{\mathrm{QQQ},t},\, r_{\mathrm{IWM},t},\, r_{\mathrm{XLK},t}\bigr)^\top,\\
\mathbf{Y}_t
&= \bigl(r_{\mathrm{XLF},t},\, r_{\mathrm{XLI},t},\, r_{\mathrm{XLY},t},\, r_{\mathrm{XLE},t}\bigr)^\top,
\end{align*}
where $r_{a,t}$ denotes the daily log return of asset $a$ at time $t$.

The smallest confounding specification is
\begin{align*}
\mathbf{Z}_t^{(5)}
&=
\bigl(r_{\mathrm{EFA},t},\, r_{\mathrm{EEM},t},\, r_{\mathrm{EZU},t},\, r_{\mathrm{EWJ},t},\, r_{\mathrm{FXI},t}\bigr)^\top.
\end{align*}
The higher-dimensional specifications are obtained by progressively enlarging this set with additional U.S.-listed international equity ETFs:
\begin{align*}
\mathbf{Z}_t^{(10)} &\supset \mathbf{Z}_t^{(5)},\\
\mathbf{Z}_t^{(20)} &\supset \mathbf{Z}_t^{(10)},\\
\mathbf{Z}_t^{(30)} &\supset \mathbf{Z}_t^{(20)}.
\end{align*}
In the higher-dimensional settings, PCA is applied to the resulting $\mathbf{Z}_t$ block prior to the Partial WaveCanCoh analysis.

\begin{table}[htbp]
\centering
\caption{Fixed ETF groups used for $\mathbf{X}_t$ and $\mathbf{Y}_t$.}
\label{tab:xy_groups}
\begin{tabular}{lll}
\toprule
Group & Ticker & Economic interpretation \\
\midrule
$\mathbf{X}_t$ & SPY & Broad U.S. large-cap market exposure \\
$\mathbf{X}_t$ & QQQ & U.S. growth / Nasdaq-100 exposure \\
$\mathbf{X}_t$ & IWM & U.S. small-cap exposure \\
$\mathbf{X}_t$ & XLK & U.S. technology sector exposure \\
\midrule
$\mathbf{Y}_t$ & XLF & U.S. financial sector exposure \\
$\mathbf{Y}_t$ & XLI & U.S. industrial sector exposure \\
$\mathbf{Y}_t$ & XLY & U.S. consumer discretionary sector exposure \\
$\mathbf{Y}_t$ & XLE & U.S. energy sector exposure \\
\bottomrule
\end{tabular}
\end{table}
\begin{center}
\footnotesize
\setlength{\tabcolsep}{4pt}
\renewcommand{\arraystretch}{0.95}

\begin{longtable}{p{0.24\textwidth} p{0.10\textwidth} p{0.56\textwidth}}
\caption{Nested construction of the confounding block $\mathbf{Z}_t$.}
\label{tab:z_groups}\\
\toprule
Expansion step & Ticker & Economic interpretation \\
\midrule
\endfirsthead

\toprule
Expansion step & Ticker & Economic interpretation \\
\midrule
\endhead

\midrule
\multicolumn{3}{r}{\textit{Continued on next page}} \\
\endfoot

\bottomrule
\endlastfoot

Core set for $\mathbf{Z}_t^{(5)}$ & EFA  & Developed markets ex U.S. and Canada \\
Core set for $\mathbf{Z}_t^{(5)}$ & EEM  & Broad emerging markets exposure \\
Core set for $\mathbf{Z}_t^{(5)}$ & EZU  & Eurozone equity market exposure \\
Core set for $\mathbf{Z}_t^{(5)}$ & EWJ  & Japan equity market exposure \\
Core set for $\mathbf{Z}_t^{(5)}$ & FXI  & China large-cap equity exposure \\

\midrule
Add to obtain $\mathbf{Z}_t^{(10)}$ & EWU  & United Kingdom equity market exposure \\
Add to obtain $\mathbf{Z}_t^{(10)}$ & EWC  & Canada equity market exposure \\
Add to obtain $\mathbf{Z}_t^{(10)}$ & EWA  & Australia equity market exposure \\
Add to obtain $\mathbf{Z}_t^{(10)}$ & INDA & India equity market exposure \\
Add to obtain $\mathbf{Z}_t^{(10)}$ & EWT  & Taiwan equity market exposure \\

\midrule
Add to obtain $\mathbf{Z}_t^{(20)}$ & EWG  & Germany equity market exposure \\
Add to obtain $\mathbf{Z}_t^{(20)}$ & EWQ  & France equity market exposure \\
Add to obtain $\mathbf{Z}_t^{(20)}$ & EWI  & Italy equity market exposure \\
Add to obtain $\mathbf{Z}_t^{(20)}$ & EWL  & Switzerland equity market exposure \\
Add to obtain $\mathbf{Z}_t^{(20)}$ & EWP  & Spain equity market exposure \\
Add to obtain $\mathbf{Z}_t^{(20)}$ & EWD  & Sweden equity market exposure \\
Add to obtain $\mathbf{Z}_t^{(20)}$ & EWK  & Belgium equity market exposure \\
Add to obtain $\mathbf{Z}_t^{(20)}$ & EWY  & South Korea equity market exposure \\
Add to obtain $\mathbf{Z}_t^{(20)}$ & EWH  & Hong Kong equity market exposure \\
Add to obtain $\mathbf{Z}_t^{(20)}$ & EWS  & Singapore equity market exposure \\

\midrule
Add to obtain $\mathbf{Z}_t^{(30)}$ & EWZ  & Brazil equity market exposure \\
Add to obtain $\mathbf{Z}_t^{(30)}$ & EWW  & Mexico equity market exposure \\
Add to obtain $\mathbf{Z}_t^{(30)}$ & EPU  & Peru equity market exposure \\
Add to obtain $\mathbf{Z}_t^{(30)}$ & ECH  & Chile equity market exposure \\
Add to obtain $\mathbf{Z}_t^{(30)}$ & ARGT & Argentina equity market exposure \\
Add to obtain $\mathbf{Z}_t^{(30)}$ & EPHE & Philippines equity market exposure \\
Add to obtain $\mathbf{Z}_t^{(30)}$ & THD  & Thailand equity market exposure \\
Add to obtain $\mathbf{Z}_t^{(30)}$ & ENZL & New Zealand equity market exposure \\
Add to obtain $\mathbf{Z}_t^{(30)}$ & TUR  & Turkey equity market exposure \\
Add to obtain $\mathbf{Z}_t^{(30)}$ & EIDO & Indonesia equity market exposure \\
\end{longtable}
\end{center}

\vspace{-0.5em}
\noindent\footnotesize
\textit{Notes.}
The construction is nested: $\mathbf{Z}_t^{(10)}$ contains all ETFs in $\mathbf{Z}_t^{(5)}$ plus the five ETFs added at the next step; $\mathbf{Z}_t^{(20)}$ contains all ETFs in $\mathbf{Z}_t^{(10)}$ plus the ten ETFs added for the 20-dimensional specification; and $\mathbf{Z}_t^{(30)}$ contains all ETFs in $\mathbf{Z}_t^{(20)}$ plus the ten ETFs added for the 30-dimensional specification. In the 20- and 30-dimensional settings, the confounding block is reduced by PCA, in accordance with Algorithm~\ref{alg:alg_partial_hd}. %the partial WaveCanCoh procedure.  
\end{appendices}

\end{document}